\DeclareMathAlphabet{\mathpzc}{OT1}{pzc}{m}{it}
\let\mathpzc\mathscr
\let\mathpzc\mathcal
\def\ltrans{[\![}
\def\rtrans{]\!]}
\renewcommand{\rew}{\redar}
\newcommand{\withrw}[1]{#1}
\title{Expressiveness via Intensionality and Concurrency}
\titlerunning{Expressiveness via Intensionality and Concurrency}
\author{Thomas Given-Wilson
\thanks{This work has been partially supported by the project ANR-12-IS02-001 PACE.}}
\authorrunning{T. Given-Wilson}
\institute{INRIA, Paris, France\\
\email{thomas.given-wilson@inria.fr}}
\begin{document}
\makeatactive

\makeatletter
\def \rightarrowfill{\m@th\mathord{\smash-}\mkern-6mu%
  \cleaders\hbox{$\mkern-2mu\mathord{\smash-}\mkern-2mu$}\hfill
  \mkern-6mu\mathord\to}
\makeatother
\makeatletter
\def \Rightarrowfill{\m@th\mathord{\smash=}\mkern-6mu%
  \cleaders\hbox{$\mkern-2mu\mathord{\smash=}\mkern-2mu$}\hfill
  \mkern-6mu\mathord\Rightarrow}
\makeatother
\def \overstackrel#1#2{\mathrel{\mathop{#1}\limits^{#2}}}

\maketitle  

\vspace{-0.4cm}

\begin{abstract}
Computation can be considered by taking into account two dimensions:
extensional versus intensional,
and sequential versus concurrent.
Traditionally sequential extensional computation can be captured by
the $\lambda$-calculus. However, recent work shows that there are more
expressive intensional calculi such as $SF$-calculus.
Traditionally process calculi capture computation by encoding the 
$\lambda$-calculus, such as in the $\pi$-calculus.
Following this increased expressiveness via intensionality, other
recent work has shown that concurrent pattern calculus is more
expressive than $\pi$-calculus.
This paper formalises the relative expressiveness of all four of these
calculi by placing them on a square whose edges are irreversible
encodings.
This square is representative of a more general result:
that expressiveness increases with both intensionality and concurrency.
\end{abstract} 


\section{Introduction}
\label{sec:intro}

\vspace{-0.1cm}

Computation can be characterised in two dimensions:
{\em extensional} versus {\em intensional};
and {\em sequential} versus {\em concurrent}.
Extensional sequential computation models are those whose {\em functions} cannot distinguish the internal structure of their {\em arguments}, here characterised by the $\lambda$-calculus \cite{Barendregt85}.
However, Jay \& Given-Wilson show that $\lambda$-calculus does not support all sequential computation \cite{JayGW11}.
In particular, there are intensional Turing-computable functions, characterised by {\em pattern-matching}, that can be represented within $SF$-calculus but not within $\lambda$-calculus \cite{JayGW11}.
Of course $\lambda$-calculus can encode Turing computation, but this is a weaker claim.
Ever since Milner et al.~showed that the $\pi$-calculus generalises $\lambda$-calculus \cite{90426,milner.parrow.ea:calculus-mobile},
concurrency theorists expect process calculi to subsume sequential computation as represented by $\lambda$-calculus \cite{90426,milner.parrow.ea:calculus-mobile,citeulike:500640}.
Following from this, here extensional concurrent computation is characterised by process calculi that
do not communicate terms with internal structure, and, at least, support 
$\lambda$-calculus.
Intensional concurrent computation is represented by process calculi whose communication
includes terms with internal structure, and reductions that depend upon the internal structure
of terms.
Here intensional concurrent computation is demonstrated by {\em concurrent pattern calculus} (CPC) that not only generalises intensional pattern-matching from sequential computation to {\em pattern-unification} in a process calculus, but also increases the {\em symmetry} of interaction \cite{GivenWilsonGorlaJay10,givenwilson:hal-00987578}.
 
These four calculi form the corners of a {\em computation square}
\vspace*{-0.3cm}
\begin{center}
\begin{picture}(250,65)(0,0)
\put(5,43){\mbox{$\lambda_v$-calculus}}
\put(125,43){\mbox{\ \ \ \ \ \ \ \ \ \ \ \ \ $SF$-calculus}}
\put(5,5){\mbox{~$\pi$-calculus}}
\put(123,5){\mbox{concurrent pattern calculus}}
\put(68,45){\vector(1,0){80}}
\put(68,8){\vector(1,0){45}}
\put(34,35){\line(0,-1){5}}
\put(34,26){\vector(0,-1){8}}
\put(183,35){\line(0,-1){5}}
\put(183,26){\vector(0,-1){8}}
\end{picture}
\end{center}
where the left side is merely extensional and the right side also intensional;
the top edge is sequential and the bottom edge concurrent.
All arrows are defined via valid encodings \cite{G:CONCUR08}.
The horizontal (solid) arrows are {\em homomorphisms} in that they also preserve {\em application} or {\em parallel composition}.
The vertical (dashed) arrows are {\em parallel encodings} in that application is mapped to a parallel composition (with some machinery).
Each arrow represents increased expressive power with CPC completing the square.

This paper presents the formalisation of these expressiveness results for the four calculi above.
This involves adapting some popular definitions of encodings \cite{G:IC08,G:DC10,G:CONCUR08}
and then building upon various prior results
\cite{Curry58combinatorylogic,90426,milner.parrow.ea:calculus-mobile,GivenWilsonGorlaJay10,JayGW11,GivenWilsonPHD}.
These can be combined to yield the new expressiveness results here captured by the
computation square.

The organisation of the paper is as follows.
Section~\ref{sec:encoding} reviews prior definitions of encodings and defines the ones used in this paper.
Section~\ref{sec:sec} reviews $\lambda$-calculus and combinatory logic while introducing common definitions.
Section~\ref{sec:sic} summarises intensionality in the sequential setting and formalises the arrow across the top of the square.
Section~\ref{sec:cec} begins concurrency through $\pi$-calculus and its parallel encoding of $\lambda_v$-calculus.
Section~\ref{sec:cic} recalls concurrent pattern calculus and completes the results of the computation square.
Section~\ref{sec:conclusions} draws conclusions, \withrw{considers related work,} and discusses future work.

\section{Encodings}
\label{sec:encoding}

This section recalls valid encodings \cite{G:CONCUR08}
for formally relating process calculi and adapts the definition to define homomorphisms and
parallel encodings.
The validity of valid encodings in developing expressiveness studies emerges from the
various works \cite{G:IC08,G:DC10,G:CONCUR08}, that have also recently inspired similar works
\cite{LPSS10,LVF10,gla12}.
Here the adaptations are precise definitions of homomorphisms that give stronger positive
results (the negative results are not required to be as strong).
Also, parallel encodings are defined to
account for the mixture of sequential and concurrent languages considered.

An {\em encoding} of a language $\Lang_1$ into another language $\Lang_2$ is a pair
$(\encode\cdot,\renpol)$ where $\encode\cdot$ translates every $\Lang_1$-term into
an $\Lang_2$-term and $\renpol$ maps every name (of the source language) into a tuple
of $k$ names (of the target language), for $k > 0$.
The translation $\encode\cdot$ turns every term of the source language into a term of the
target; in doing this, the translation may fix some names to play a precise r\^ole 
or may translate a single name into a tuple of names. This can be obtained
by exploiting $\renpol$.

Now consider only encodings that satisfy the following properties.
Let a {\em $k$-ary context} $\context C {\_\,_1; \ldots; \_\,_k}$ be a term with $k$
holes $\{\_\,_1;\ldots;\_\,_k\}$ that appear exactly once each.
Moreover, denote with $\redar_i$ and $\Redar_i$ 
the relations $\redar$ (reduction relation) and
$\Redar$ (the reflexive transitive closure of $\redar$) in language $\Lang_i$;
denote with $\redar^\omega_i$ an infinite sequence of reductions in $\Lang_i$.
Moreover,
let $\equiv_i$ denote the structural equivalence relation for a language $\Lang_i$, and
$\bisim_i$ denote the reference behavioural equivalence for language $\Lang_i$.
For simplicity the notation $T \redar_i \equiv_i T'$ denotes that there exists $T''$
such that $T\redar_i T''$ and $T''\equiv_i T'$, and may also be used with $\Redar_i$
or $\bisim_i$. 
Also, let $P \suc_i$ mean that there exists $P'$ such that $P \Redar_i P'$ and $P' \equiv_i P''\bnf \ok$,
for some $P''$ where $\ok$ is a specific process to indicate success.
Finally, to simplify reading, let $S$ range
over terms of the source language (viz., $\Lang_1$) and $T$ range
over terms of the target language (viz., $\Lang_2$).

\begin{definition}[Valid Encoding (from \cite{G:CONCUR08})]
\label{def:ve}
An encoding $(\encode\cdot,\renpol)$ of $\Lang_1$ into $\Lang_2$
is {\em valid} if it satisfies the following five properties:
\begin{enumerate}
\item {\em Compositionality:} for every $k$-ary operator $\op$ of $\Lang_1$
and for every subset of names $N$,
there exists a $k$-ary context $\CopN C \op N {\_\,_1; \ldots; \_\,_k}$ of $\Lang_2$
such that, for all $S_1,\ldots,S_k$ with ${\sf fn}(S_1,\ldots,S_k) = N$, it holds
that $\encode{\op(S_1,\ldots,S_k)} = \CopN C \op N {\encode{S_1};\ldots;\encode{S_k}}$.

\item {\em Name invariance:}
for every $S$ and name substitution $\sigma$, it holds that
$$
\encode{\sigma S}\ \left\{ 
\begin{array}{ll}
\ =\ \sigma'\encode S& \mbox{ if $\sigma $ is injective}\\
\ \bisim_2\ \sigma'\encode S  & \mbox{ otherwise}
\end{array}
\right.
$$
where $\sigma'$ is such that 
$\renpol(\sigma(a)) = \sigma'(\renpol(a))$
for every name $a$.

\item {\em Operational correspondence:}
\begin{itemize}
\item for all $S \Redar_1 S'$, it holds that $\encode S \Redar_2 \bisim_2 \encode {S'}$;
\item for all $\encode S \Redar_2 T$, there exists $S'$ such that $S \Redar_1\!\! S'$ 
and $T \Redar_2 \bisim_2\!\! \encode {S'}$.
\end{itemize}

\item {\em Divergence reflection:}
for every $S$ such that 
$\encode S \redar\!\!_2^\omega$, it holds that 
\linebreak $S$ \mbox{$\redar\!\!_1^\omega$}.

\item {\em Success sensitiveness:}
for every $S$, it holds that $S \suc_1$ if and only if $\encode S \suc_2$.
\end{enumerate}
\end{definition}

Observe that the definition of valid encoding is very general and, with the exception
of success sensitiveness, can apply to sequential languages such as $\lambda$-calculus
as well as process calculi.
(On the understanding that a name substitution for sequential
calculi is a mapping from names/variables to names/variables {\em not} terms.)
However, the relations presented in this work bring together a variety of
prior results and account for them in a stronger and more uniform manner.
To this end, the following definitions support the results.
The first two define homomorphism in the sequential and concurrent settings.

\begin{definition}[Homomorphism (Sequential)]
\label{def:homo-seq}
A {\em (sequential) homomorphism} is
a translation $\encode\cdot$ from one language to another that satisfies:
compositionality, name invariance, operational correspondence, and divergence reflection;
and that preserves application, i.e.~where $\encode {S_1~S_2} = \encode{S_1}~\encode{S_2}$.
\end{definition}

\begin{definition}[Homomorphism (Concurrent)]
\label{def:homo-con}
A {\em (concurrent) homomorphism} is a valid encoding whose translation preserves parallel composition,
i.e.~$\encode {P_1\bnf P_2}=\encode {P_1}\bnf \encode {P_2}$.
\end{definition}

The next is for encoding sequential languages into concurrent languages and
exploits
that $\encode\cdot _c$ indicates
an encoding from source terms to target terms that is parametrised by a name $c$.

\begin{definition}[Parallel Encoding]
\label{def:pe}
An encoding $(\encode\cdot_c,\renpol)$ of $\Lang_1$ into $\Lang_2$
is a {\em parallel encoding} if it satisfies the first four properties of a valid encoding
(compositionality, name invariance, operational correspondence, 
and divergence reflection)
and the following additional property.
\begin{enumerate}
\setcounter{enumi}{4}
\item {\em Parallelisation:} 
      The translation of the application $MN$ is of the form
      $\xtrans {MN} _c \define \res {n_1} \res {n_2} (
        {\mathcal A}(c,n1,n2)\bnf \xtrans M _{n1} \bnf \xtrans N _{n2})$
      where ${\mathcal A}$ is a process parametrised by $c$ and $n1$ and $n2$.
\end{enumerate}
\end{definition}

Parallelisation is a restriction on the more general compositionality criteria.
Here this ensures that in addition to compositionality, the translation must allow
for independent reduction of the components of an application.
As the shift from sequential to concurrent computation can exploit this to support parallel
reductions, the definition of parallel encoding 
encourages more flexibility in reduction since components can be reduced independently.

The removal of the success sensitiveness property is for simplicity when using prior
results.
It is not difficult to include success sensitiveness, this involves adding the success
primitive to the sequential languages and defining $S\suc$, e.g.~$S\suc$ means that
$S\redar^*\ok$.
Additionally, this requires adding a test process $Q_c$ to the definition of parallel encoding
with success sensitiveness defined by:
``for every $S$, it holds that $S \suc_1$ if and only if $\encode S _c \bnf Q_c\suc_2$.
However, since adding the success state $\ok$ to $\l$-calculus and combinatory
logics\footnote{The results for intensional combinatory logics require that success
behaves as a {\em constructor} as discussed for various combinatory logics in \cite{JayGW11}.}
would require redoing many existing results, it is easier to avoid the added
complexity since no clarity or gain in significance is made by adding it.

Encodings from concurrent languages into sequential ones have not been defined specifically
here since they prove impossible. The proof of these results relies merely on the
requirement of 
operational correspondence, and so shall be
done on a case-by-case basis.

\section{Sequential Extensional Computation}
\label{sec:sec}

Both $\lambda$-calculus and traditional combinatory logic base reduction rules upon the application of a function to one or more arguments.
Functions in both models are extensional in nature, that is a function does not have direct access to the internal structure of its arguments.
Thus, functions that are extensionally equal are indistinguishable within either model even though they may have different normal forms.

The relationship between the $\lambda$-calculus and traditional combinatory logic is closer than sharing application-based reduction and extensionality.
There is a homomorphism from call-by-value $\lambda_v$-calculus into any combinatory logic that supports the combinators $S$ and $K$ \cite{Curry58combinatorylogic,Barendregt85}.
There is also a homomorphism from traditional combinatory logic to a $\lambda$-calculus with more generous operational semantics \cite{Curry58combinatorylogic,Barendregt85}.

\subsection{$\lambda$-calculus}
\label{ssec:lambda}

The {\em term} syntax of the $\lambda$-calculus is given by
\[
t ::= x \bnf t~t \bnf \lambda x.t \; .
\]
The {\em free variables} of a term are defined in the usual manner.
A {\em substitution} $\sigma$ is defined as a partial function from variables to terms.
The {\em domain} of $\sigma$ is denoted $\textsf{dom}(\sigma)$;
the free variables of $\sigma$, written $\textsf{fv}(\sigma)$, is given by the union of the sets $\textsf{fv}(\sigma x)$ where $x \in \textsf{dom}(\sigma)$.
The {\em variables} of $\sigma$, written $\textsf{vars}(\sigma)$, are $\textsf{dom}(\sigma)\cup\textsf{fv}(\sigma)$.
A substitution $\sigma$ {\em avoids} a variable $x$ (or collection of variables $\mu$) if $x\notin\textsf{vars}(\sigma)$ (respectively $\mu\cap\textsf{vars}(\sigma)=\{\}$).
Note that all substitutions considered in this paper have finite domain.
The application of a substitution $\sigma$ to a term $t$ is defined as usual, as is $\alpha$-conversion $=_\alpha$.

There are several variations of the $\lambda$-calculus with different operational semantics.
For construction of the computation square by exploiting the results of Milner et al.~\cite{90426}, it is necessary to choose an operation semantics, such as {\em call-by-value} $\lambda_v$-calculus or {\em lazy} $\lambda_l$-calculus.
The choice here is to use call-by-value $\lambda_v$-calculus, although the results can be reproduced for lazy $\lambda_l$-calculus as well.
In addition a more generous operation semantics for $\lambda$-calculus will be presented for later discussion and relations.

To formalise the reduction of call-by-value $\lambda_v$-calculus requires a notion of {\em value} $v$. These are defined in the usual way, by
\begin{eqnarray*}
v &::=& x \bnf \lambda x.t
\end{eqnarray*}
consisting of variables and $\lambda$-abstractions.

Computation in the $\lambda_v$-calculus is through the $\beta_v$-reduction rule
\begin{eqnarray*}
(\lambda x.t)v &\quad\rew_v\quad& \{v/x\}t \; .
\end{eqnarray*}
When an abstraction $\lambda x.t$ is applied to a value $v$ then substitute $v$ for $x$ in the body $t$.
The {\em reduction relation} (also denoted $\rew_v$) is the smallest that satisfies the following rules
\begin{equation*}
\Rule{}{}{(\lambda x.t)v \ \rew_v\  \{v/x\}t}{}
\qquad
\Rule{}{s\rew_v s'}{s~t\rew_v s'~t}{}
\qquad
\Rule{}{t\rew_v t'}{s~t\rew_v s~t'}{} \; .
\end{equation*}
The transitive closure of the reduction relation is denoted $\rew_v^*$ though the star may be elided if it is obvious from the context.

The more generous operational semantics for the $\lambda$-calculus allows any term to be the argument when defining $\beta$-reduction. Thus the more generous $\beta$-reduction rule is
\begin{eqnarray*}
(\lambda x.s)t \quad\rew\quad \{t/x\}s 
\end{eqnarray*}
where $t$ is any term of the $\lambda$-calculus.
The reduction relation $\rew$ and the transitive closure thereof $\rew^*$ are obvious adaptations from those for the $\lambda_v$-calculus.
Observe that any reduction $\rew_v$ of $\lambda_v$-calculus is also a reduction $\rew$ of $\lambda$-calculus.

\subsection{Traditional Combinatory Logic}
\label{ssec:sk}

A {\em combinatory calculus} is given by a finite collection ${\mathcal
  O}$ of {\em operators} (meta-variable $O$) that are
used to define the ${\mathcal O}$-{\em combinators} (meta-variables
$M,N,X,Y,Z$) built from these by application
\[
M,N ::= O \bnf MN \; .
\]
The {\em ${\mathcal O}$-combinatory calculus} or {\em ${\mathcal O}$-calculus}
is given by the combinators plus their reduction rules.

Traditional combinatory logic can be represented by two combinators $S$ and $K$ \cite{Curry58combinatorylogic} so the $SK$-calculus has {\em reduction rules}
\[
\begin{array}{rcll}
SMNX &\rew& MX(NX)\\
KXY &\rew& X \; .
\end{array}
\]
The combinator $SMNX$ duplicates $X$ as the argument to both $M$ and
$N$.  The combinator $KXY$ eliminates $Y$ and returns $X$.
The {\em reduction relation} $\rew$ is 
as for $\lambda$-calculus.

Although this is sufficient to provide a direct account of functions in the style
of $\lambda$-calculus, an alternative is to consider the representation
of arbitrary computable functions that act upon combinators.

A {\em symbolic function} is defined to be an $n$-ary partial
function ${\mathcal G}$ of some combinatory logic, i.e.~a function of
the combinators that preserves their equality, as determined by the
reduction rules.  That is, if $X_i = Y_i$ for $1\le i\le n$ then
${\mathcal G}(X_1,X_2,\ldots,X_n) = {\mathcal G}(Y_1,Y_2,\ldots, Y_n)$
if both sides are defined.  A symbolic function is {\em restricted} to
a set of combinators, e.g.\ the normal forms, if its domain is within
the given set.

A combinator $G$ in a calculus {\em represents} ${\mathcal G}$ if
\[
GX_1\ldots X_n = {\mathcal G}(X_1,\ldots,X_n)
\]
whenever the right-hand side is defined.  For example, the symbolic
functions
${\mathcal S}(X_1,X_2,X_3) = X_1X_3(X_2X_3)$ and
${\mathcal K}(X_1,X_2) = X_1$
are represented by $S$ and $K$, respectively, 
in $SK$-calculus.
  Consider the
symbolic function
${\mathcal I}(X) = X$. 
In $SKI$-calculus where $I$ has the rule
$IY \rew Y$
then ${\mathcal I}$ is represented by $I$. In both $SKI$-calculus and
$SK$-calculus, ${\mathcal I}$ is represented by any combinator of the
form $SKX$ since
\[
SKXY \redar KY(XY) \redar Y\;.
\]
For convenience define the {\em identity combinator} $I$ in $SK$-calculus to be $SKK$.

\subsection{Relations}
\label{sec:lambda=SK}

One of the goals of combinatory logic is to give an equational account
of variable binding and substitution, particularly as it appears in
$\lambda$-calculus.
In order to represent $\lambda$-abstraction, it is necessary to have some
variables to work with. Given ${\mathcal O}$ as before, define the
{\em ${\mathcal O}$-terms} by 
\[
M,N ::= x \bnf O \bnf MN 
\]
where $x$ is as in $\lambda$-calculus.
Free variables, substitutions, and symbolic computations are defined
just as for ${\mathcal O}$-calculus.

Given a variable $x$ and term $M$ define a symbolic function ${\mathcal G}$ on terms by 
\[
{\mathcal G}(X) = \{X/x\}M\; .
\]
Note that if $M$ has no free variables other than $x$ then ${\mathcal G}$
is also a symbolic computation of the combinatory logic.  If every
such function ${\mathcal G}$ on ${\mathcal O}$-combinators is representable then the
${\mathcal O}$-combinatory logic is {\em combinatorially complete} in the
sense of Curry \cite[p.~5]{Curry58combinatorylogic}.

Given $S$ and $K$ then ${\mathcal G}$ above can be
represented by a term $\l^* x.M$ given by
\begin{equation*}
\begin{array}{rcl}
\lambda^{*}x.x &=& I\\
\lambda^{*}x.y &=& K y \quad\mbox{if $y\neq x$}
\end{array}\qquad\qquad
\begin{array}{rcl}
\lambda^{*}x.O &=& KO  \\
\lambda^{*}x.M N &=& S (\lambda^{*}x.M) (\lambda^{*}x.N) \; .
\end{array}
\end{equation*}

The following lemmas are central results of combinatory logic
\cite{Curry58combinatorylogic} and Theorem~2.3 of \cite{JayGW11}. This is sufficient
to show there is a homomorphism from $\lambda_v$-calculus to any combinatory calculus 
that represents $S$ and $K$.

\begin{lemma}
\label{lem:beta*}
For all terms $M$ and $N$ and variables $x$ there is a reduction
$(\lambda ^*x.M)~N \rew^* \{N/x\}M$.
\end{lemma}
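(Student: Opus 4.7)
The plan is to proceed by structural induction on the term $M$, following exactly the four clauses of the definition of $\lambda^*x.M$. The base cases are immediate once one unfolds the definitions and applies the $S$ and $K$ reduction rules, and the inductive case for application is where the choice of $S$ in the abstraction scheme pays off.

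For the base cases, if $M = x$ then $\lambda^*x.x = I = SKK$, so $(\lambda^*x.x)\,N = SKK\,N \rew K\,N\,(K\,N) \rew N = \{N/x\}x$, using the chain of reductions already recorded in the excerpt for $SKX$. If $M = y$ with $y \neq x$, then $(\lambda^*x.y)\,N = Ky\,N \rew y = \{N/x\}y$. If $M = O$ is an operator, then $(\lambda^*x.O)\,N = KO\,N \rew O = \{N/x\}O$. Each uses a single step of the $K$-rule (or the two steps hidden inside $I$).

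The inductive case is $M = M_1 M_2$. Here $\lambda^*x.(M_1 M_2) = S(\lambda^*x.M_1)(\lambda^*x.M_2)$, so one application of the $S$-rule gives
\[
(\lambda^*x.(M_1 M_2))\,N \rew (\lambda^*x.M_1)\,N\;((\lambda^*x.M_2)\,N).
\]
By the induction hypothesis applied to $M_1$ and $M_2$, each subterm $(\lambda^*x.M_i)\,N$ reduces to $\{N/x\}M_i$; contextual closure of $\rew$ (under the left and right of an application) then yields $(\lambda^*x.M_1)\,N\;((\lambda^*x.M_2)\,N) \rew^* \{N/x\}M_1\;(\{N/x\}M_2) = \{N/x\}(M_1 M_2)$, which closes the induction.

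There is no real obstacle; the only thing one must be mildly careful about is the contextual closure step in the application case, since the reduction relation for combinatory logic must propagate under application (as noted just before the statement), and the observation that substitution distributes over application so that $\{N/x\}(M_1 M_2) = (\{N/x\}M_1)(\{N/x\}M_2)$. A capture-avoidance subtlety does not arise since combinatory terms bind no variables.
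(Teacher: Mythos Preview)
Your proof is correct and is exactly the classical structural-induction argument for this result. The paper does not actually give its own proof of this lemma; it simply cites it as a ``central result of combinatory logic'' from Curry and from Theorem~2.3 of \cite{JayGW11}, and your argument is precisely the standard one found in those references.
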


\begin{lemma}
\label{lem:lambda2SK}
  Any combinatory calculus that is able to represent $S$ and $K$
  is combinatorially complete.
\end{lemma}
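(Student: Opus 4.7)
The plan is to give a constructive proof via an adapted $\lambda^*$-abstraction. Let $\mathcal{C}$ be a combinatory calculus that represents the symbolic functions $\mathcal{S}$ and $\mathcal{K}$ by some combinators $\bar S$ and $\bar K$ of $\mathcal{C}$, so that $\bar S X Y Z = X Z (Y Z)$ and $\bar K X Y = X$ whenever the right-hand sides are defined (equality determined by $\mathcal{C}$'s reduction rules). I would also fix a representative $\bar I$ of the identity symbolic function, noting that the choice $\bar I = \bar S \bar K \bar K$ always works since $\bar S \bar K \bar K Y = \bar K Y (\bar K Y) = Y$.

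Next, I would define $\lambda^* x.M$ by structural recursion on the $\mathcal{O}$-term $M$ exactly as in the earlier definition, but using $\bar S$, $\bar K$, and $\bar I$ in place of $S$, $K$, and $I$. The goal is to show that for every term $M$ and variable $x$, the combinator $\lambda^* x.M$ represents the symbolic function $\mathcal{G}(X) = \{X/x\}M$, which amounts to proving
\[
(\lambda^* x.M)\, N \ =\ \{N/x\}M
\]
as an equality of combinators of $\mathcal{C}$, whenever the right-hand side is defined. This is the content of Lemma~\ref{lem:beta*} transported to $\mathcal{C}$: I would re-run its induction on $M$, using only the defining equalities of $\bar S$, $\bar K$ and $\bar I$ (which hold by assumption that these combinators \emph{represent} $\mathcal{S}$, $\mathcal{K}$, $\mathcal{I}$) rather than the literal reduction rules of $SK$-calculus. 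The base cases for $x$, $y \neq x$, and operators $O$ are immediate, and the application case $M N$ reduces to the inductive hypothesis using the defining equality of $\bar S$.

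The one subtle point, and the main obstacle, is precisely that $\mathcal{C}$ need not contain $S$ and $K$ as primitive operators with their associated rewrite rules; we are only given combinators that represent the symbolic functions $\mathcal{S}$ and $\mathcal{K}$. So the argument must proceed at the level of the equality induced by $\mathcal{C}$'s reductions rather than by a fixed rewrite sequence. Since the notion of ``represents'' is already defined in terms of this equality, the inductive step goes through: each use of a $\beta^*$-step in the proof of Lemma~\ref{lem:beta*} is replaced by an appeal to the defining equation of $\bar S$ or $\bar K$, which is admissible by assumption. This yields a combinator $\lambda^* x.M$ of $\mathcal{C}$ representing $\mathcal{G}$ for every term $M$ and variable $x$, establishing combinatorial completeness.
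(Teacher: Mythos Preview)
Your proposal is correct and follows the same approach as the paper. The paper does not spell out a proof of this lemma but cites it as a standard result of combinatory logic (Curry, and Theorem~2.3 of \cite{JayGW11}), relying on the $\lambda^*$ construction displayed just before the statement; your argument is precisely the standard unfolding of that construction, with the appropriate care taken to work with representatives $\bar S$, $\bar K$, $\bar I$ and the induced equality rather than primitive operators and rewrite steps.
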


\begin{theorem}
\label{thm:lambda2SK}
There is a homomorphism (Definition~\ref{def:homo-seq}) from $\lambda$-calculus into $SK$-calculus.
\end{theorem}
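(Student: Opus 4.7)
The plan is to take the translation $\encode\cdot$ defined structurally by $\encode x = x$, $\encode{MN} = \encode M~\encode N$, and $\encode{\lambda x.t} = \lambda^* x.\encode t$, with the renaming policy the identity on variables. Preservation of application is then literally the content of the middle clause, and compositionality holds with the obvious one-hole contexts for each of the three operators. Name invariance is a routine induction exploiting the fact that the $\lambda^*$-scheme is defined uniformly in the bound variable and commutes with injective renamings on free variables; for non-injective substitutions equality is replaced by $\bisim_2$ in the usual way.

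For the forward direction of operational correspondence I would first establish the substitution lemma $\encode{\{v/x\}t} = \{\encode v/x\}\encode t$ by induction on $t$, which amounts to checking that $\lambda^*$ commutes with substitution of a variable distinct from the bound one. Lemma~\ref{lem:beta*} then yields $(\lambda^* x.\encode t)~\encode v \rew^* \{\encode v/x\}\encode t = \encode{\{v/x\}t}$, so each source $\beta_v$-step is matched by a nonempty sequence of $SK$-reductions landing exactly on the encoding of the contractum. Congruence under application on either side in $SK$-calculus lifts this from head redexes to arbitrary reduction contexts.

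The backward direction of operational correspondence, together with divergence reflection, is the main obstacle, since reductions in the image $\encode S$ may fire anywhere within the $\lambda^*$-generated skeleton and need not correspond to a $\beta_v$-redex in $S$. I would tackle this by invoking confluence of $SK$-calculus together with a syntactic analysis of the shape of $\lambda^* x.M$, to argue that any reduction $\encode S \rew^* T$ can always be extended to some $T \rew^* T'$ with $T' = \encode{S'}$ (modulo $\bisim_2$) for some $S \Redar_v S'$. Divergence reflection then follows by a standard pigeonhole argument: an infinite $SK$-reduction from $\encode S$ forces infinitely many matched source-level $\beta_v$-steps, yielding an infinite $\lambda_v$-computation from $S$. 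Combining these verifications establishes the five properties of Definition~\ref{def:homo-seq} and completes the proof.
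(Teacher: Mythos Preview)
Your overall strategy matches the paper's proof, which likewise takes the $\lambda^*$-based translation and disposes of compositionality, name invariance, and preservation of application ``by construction'', then appeals to combinatorial completeness (Lemma~\ref{lem:lambda2SK}, itself resting on Lemma~\ref{lem:beta*}) for operational correspondence and divergence reflection.

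There is, however, a genuine gap in your forward-simulation argument. The substitution lemma $\encode{\{v/x\}t} = \{\encode v/x\}\encode t$ you propose is \emph{false} as a syntactic equality for the $\lambda^*$ defined in the paper. That scheme proceeds by case analysis on the \emph{form} of its body, so it does not commute with substitution: take $t = \lambda y.x$, so $\encode t = \lambda^* y.x = Kx$ and hence $\{\encode v/x\}\encode t = K\encode v$; but $\encode{\{v/x\}t} = \lambda^* y.\encode v$, and whenever $\encode v$ is an application this unfolds via the $S$-clause to a term syntactically distinct from $K\encode v$. Thus precisely the inductive step you single out---``$\lambda^*$ commutes with substitution of a variable distinct from the bound one''---fails. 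The repair is to weaken the claim to $\encode{\{v/x\}t} \bisim_2 \{\encode v/x\}\encode t$, which is all that the operational-correspondence clause of Definition~\ref{def:ve} requires and which does follow from Lemma~\ref{lem:beta*} together with compatibility of $\bisim_2$ under application contexts. A minor side note: the theorem is stated for full $\lambda$-calculus rather than $\lambda_v$, so your references to $\beta_v$-steps and $\Redar_v$ should be to $\beta$ and $\Redar$.
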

\begin{proof}
Compositionality, name invariance, and preservation of application hold by construction.
Operational correspondence and divergence reflection can by proved via Lemma~\ref{lem:lambda2SK}.
\end{proof}

Below is a standard translation from $SK$-calculus into $\lambda$-calculus that
preserves reduction and supports the following lemma \cite{Curry58combinatorylogic,Barendregt85}.
\begin{eqnarray*}
\ltrans S\rtrans = \lambda g.\lambda f.\lambda x.g~x~(f~x)&\qquad\quad
\ltrans K\rtrans = \lambda x.\lambda y.x\qquad\quad&
\ltrans MN \rtrans = \ltrans M\rtrans~\ltrans N\rtrans
\end{eqnarray*}

\begin{lemma}[Theorem~2.3.3 of \cite{GivenWilsonPHD}]
\label{lem:SKinLambda}
Translation from $SK$-calculus to $\lambda$-calculus preserves the reduction relation.
\end{lemma}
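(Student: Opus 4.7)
The plan is to prove this by induction on the derivation of the $SK$-reduction $M \rew N$. The base cases correspond to the two atomic rules ($SMNX \rew MX(NX)$ and $KXY \rew X$), while the inductive cases handle reduction inside applications. Because the translation is homomorphic on application ($\ltrans MN\rtrans = \ltrans M\rtrans~\ltrans N\rtrans$), the inductive cases are immediate: if the induction hypothesis yields $\ltrans M\rtrans \rew^* \ltrans M'\rtrans$, then congruence of $\rew$ under application in $\lambda$-calculus gives $\ltrans MN\rtrans \rew^* \ltrans M'N\rtrans$, and similarly on the right. So essentially all the content of the lemma is in the two base cases.

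For the $K$-rule, I would compute directly:
\begin{equation*}
\ltrans KXY\rtrans \;=\; (\lambda x.\lambda y.x)~\ltrans X\rtrans~\ltrans Y\rtrans
\;\rew\; (\lambda y.\ltrans X\rtrans)~\ltrans Y\rtrans
\;\rew\; \ltrans X\rtrans\,.
\end{equation*}
For the $S$-rule, similarly three successive $\beta$-steps give
\begin{equation*}
\ltrans SMNX\rtrans \;=\; (\lambda g.\lambda f.\lambda x.\,g~x~(f~x))~\ltrans M\rtrans~\ltrans N\rtrans~\ltrans X\rtrans
\;\rew^*\; \ltrans M\rtrans~\ltrans X\rtrans~(\ltrans N\rtrans~\ltrans X\rtrans)
\;=\; \ltrans MX(NX)\rtrans\,,
\end{equation*}
where the last equality uses the homomorphism clause $\ltrans MN\rtrans = \ltrans M\rtrans~\ltrans N\rtrans$ applied twice. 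The point worth flagging is that these $\beta$-steps substitute $\ltrans M\rtrans$, $\ltrans N\rtrans$, $\ltrans X\rtrans$, $\ltrans Y\rtrans$ for bound variables, and these translated terms are not in general values (e.g.\ they may be further redexes). This is exactly why the lemma refers to the \emph{more generous} operational semantics $\rew$ introduced earlier, rather than $\rew_v$; under call-by-value these steps would be blocked.

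The main (minor) obstacle is just being careful about this value restriction: one needs to confirm that the statement is about $\rew$, not $\rew_v$, and to note that the observation at the end of Section~2.1 (every $\rew_v$-step is a $\rew$-step) is not enough in the converse direction here. Once that is settled, the proof is a routine structural induction with two short computations. Concretely, I would structure the argument as: (i) state the induction on the derivation of $M \rew N$; (ii) dispatch the two axiom cases by the calculations above; (iii) dispatch the two congruence cases using the homomorphic clause for application and congruence of $\rew$; and (iv) extend to $\rew^*$ by an outer induction on the length of the $SK$-reduction sequence.
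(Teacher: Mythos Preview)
Your proposal is correct and is exactly the standard argument one expects here. The paper itself does not supply a proof for this lemma, deferring instead to Theorem~2.3.3 of the cited thesis; your induction on the $SK$-reduction derivation, with the two direct $\beta$-computations for the $S$ and $K$ axioms and the congruence cases handled via the homomorphic clause $\ltrans MN\rtrans=\ltrans M\rtrans~\ltrans N\rtrans$, is precisely the routine verification that reference contains. Your remark that the target must be the generous $\lambda$-calculus relation $\rew$ rather than $\rew_v$ (since, e.g., $\ltrans SK\rtrans$ is an application and not a value) is also on point and matches the paper's choice of $\lambda$-calculus, not $\lambda_v$-calculus, as the codomain in Theorem~\ref{thm:SK2lambda}.
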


\begin{theorem}
\label{thm:SK2lambda}
There is a homomorphism (Definition~\ref{def:homo-seq}) from $SK$-calculus into $\lambda$-calculus.
\end{theorem}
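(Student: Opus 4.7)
The plan is to verify the four clauses of Definition~\ref{def:homo-seq} (compositionality, name invariance, operational correspondence, divergence reflection) together with preservation of application for the translation $\ltrans\cdot\rtrans$ displayed just above Lemma~\ref{lem:SKinLambda}, exactly paralleling the argument of Theorem~\ref{thm:lambda2SK}. Compositionality is immediate: the clauses for $S$, $K$, and application give, for each operator, a fixed target context into which the translations of sub-combinators are plugged. Preservation of application is literally the third clause $\ltrans MN\rtrans = \ltrans M\rtrans\,\ltrans N\rtrans$. Name invariance is essentially vacuous since pure $SK$-combinators carry no free names; for the $SK$-terms of Section~\ref{sec:lambda=SK} we would simply set $\renpol$ to be the identity on variables and observe that $\ltrans \sigma M\rtrans = \sigma\ltrans M\rtrans$ by induction on $M$, using that the closed $\lambda$-bodies $\ltrans S\rtrans$ and $\ltrans K\rtrans$ are undisturbed by any substitution on free variables.

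The forward half of operational correspondence is Lemma~\ref{lem:SKinLambda}: each $SK$-reduction $M\rew M'$ is matched by a finite sequence of $\beta$-reductions $\ltrans M\rtrans \Redar \ltrans M'\rtrans$, obtained by unfolding $\ltrans S\rtrans$ or $\ltrans K\rtrans$ applied to three or two arguments respectively. Taking $\bisim_2$ to be $\beta$-conversion (or any coarser behavioural equivalence on generously-reducing $\lambda$-terms) lets us discharge the "up to $\bisim_2$" clause trivially.

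The backward half is the only part that needs genuine argument, because $\lambda$-reductions inside $\ltrans M\rtrans$ may fire on an encoded combinator that is only partially applied: for instance $\ltrans SX\rtrans = (\lambda g.\lambda f.\lambda x.\,gx(fx))\,\ltrans X\rtrans$ admits a $\beta$-step that has no counterpart in $SK$. My plan is to classify each such step as either \emph{partial} (an intermediate unfolding of $\ltrans S\rtrans$ or $\ltrans K\rtrans$ that has not yet reached the third, resp.\ second, argument) or \emph{completing} (the final $\beta$-step that actually discharges the encoded redex and produces $\ltrans M'\rtrans$ for some $M\rew M'$). Partial steps are $\beta$-convertible to their source, so any $\lambda$-reduction $\ltrans M\rtrans \Redar T$ ending in a run of partial steps can be matched by the reflexive $M\Redar M$ together with $T\bisim_2\ltrans M\rtrans$; any completing step extends this to a genuine $SK$-step. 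A straightforward induction on the length of the $\lambda$-reduction then yields the required $M'$ and $T\Redar\bisim_2\ltrans M'\rtrans$.

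Divergence reflection follows from the same classification. Each encoded redex admits only a bounded number of partial $\beta$-steps before either terminating or being completed, so an infinite reduction out of $\ltrans M\rtrans$ must contain infinitely many completing steps; collecting the associated source reductions produces an infinite $SK$-reduction out of $M$. The main (minor) obstacle is making this partial/completing classification precise without appealing to an external notion of residuals; I expect that a direct induction on the shape of the $\lambda$-term, tracking which subterm arose as the body of a translated $S$ or $K$ occurrence, suffices and carries all four clauses simultaneously.
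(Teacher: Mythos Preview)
Your proposal is correct and follows essentially the same route as the paper: compositionality and preservation of application by construction, name invariance trivial, and operational correspondence and divergence reflection via Lemma~\ref{lem:SKinLambda}. The paper's own proof is a two-line sketch that simply cites that lemma for the last two properties, so you have in fact filled in detail that the paper omits.

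One small observation: once you take $\bisim_2$ to be $\beta$-conversion, the backward half of operational correspondence collapses---any $T$ with $\ltrans M\rtrans \Redar T$ satisfies $T =_\beta \ltrans M\rtrans$, so the reflexive choice $M' = M$ already works and your partial/completing classification is not needed there. Where that classification genuinely earns its keep is in divergence reflection, which is \emph{not} implied by operational correspondence and which the paper does not argue explicitly. Your bounded-administrative-steps argument is the standard way to handle it; the only care required is that ``partial'' unfoldings of $\ltrans S\rtrans$ can duplicate subterms (the body $gx(fx)$ contains $x$ twice), so the bound should be phrased in terms of a suitable weight on the source combinator rather than a naive step count. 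With that adjustment the induction goes through.
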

\begin{proof}
Compositionality and preservation of application hold by construction. Name invariance is trivial.
Operational correspondence and divergence reflection are proved via Lemma~\ref{lem:SKinLambda}.
\end{proof}

Although the top left corner of the computation square is populated by $\lambda_v$-calculus, the arrows out allow for either $\lambda_v$-calculus or $SK$-calculus to be used.
Indeed, the homomorphisms in both directions between $\lambda$-calculus and $SK$-calculus allow these two calculi to be considered equivalent.

\section{Sequential Intensional Computation}
\label{sec:sic}

Intuitively intensional functions are more expressive than merely extensional functions,
however populating the top right corner of the computation square requires more formality than intuition.
The cleanest account of this is by considering combinatory logic.

Even in $SK$-calculus there are Turing-computable functions defined upon the combinators that cannot be represented within $SK$-calculus.
For example, consider the function that reduces any combinator of the form $SKX$ to $X$.
Such a function cannot be represented in $SK$-calculus, or $\lambda$-calculus, as all combinators of the form $SKX$ represent the identity function. However, such a function is Turing-computable and definable upon the combinators.
This is an example of a more general problem of {\em factorising} combinators that are both applications and stable under reduction.

Exploiting this factorisation is $SF$-calculus \cite{JayGW11} that is able to support intensional functions on combinators including a structural equality of normal forms.
Thus $SF$-calculus sits at the top right hand corner of the computation square.
The arrow across the top of the square is formalised by showing a homomorphism from $SK$-calculus into $SF$-calculus. The lack of a converse has been proven by showing that the intensionality of $SF$-calculus cannot be represented within $SK$-calculus, or $\lambda$-calculus \cite{JayGW11}.

\subsection{Symbolic Functions}
\label{sec:SF-sym}

Symbolic functions need not be merely extensional, indeed it is possible to define symbolic functions that consider the structure of their arguments.
Observe that each operator $O$ has an {\em arity} given by the minimum number of arguments it requires to
instantiate a rule. Thus, $K$ has arity $2$ while $S$ has arity $3$. A
{\em partially applied operator} is a combinator of the form
$OX_1\ldots X_k$ where $k$ is less than the arity of $O$.  An operator
with a positive arity is an {\em atom} (meta-variable $A$).  A
partially applied operator that is an application is a {\em compound}.
Hence, the partially applied operators of $SK$-calculus are the atoms
$S$ and $K$, and the compounds $SM$, $SMN$ and $KM$ for any $M$ and
$N$.

Now define a {\em factorisation function} ${\mathcal F}$ on
combinators by
\[
\begin{array}{rcll}
{\mathcal F}(A,M,N) &\rew& M &\mbox{if $A$ is an atom}\\
{\mathcal F}(XY,M,N) &\rew& NXY \quad&\mbox{if $XY$ is a compound.} 
\end{array}
\]

\begin{lemma}[Theorem~3.2 of \cite{JayGW11}]
\label{lem:noFinSK:1}
Factorisation of $SK$-combinators is a symbolic computation that is
not representable within $SK$-calculus.
\end{lemma}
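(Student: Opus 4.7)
The plan is to handle the two conjuncts of the statement in turn: first that $\mathcal{F}$ is a symbolic computation, and second that it is not representable in $SK$-calculus.

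For the first part, I would appeal to confluence of $SK$-reduction, so that each equivalence class of combinators has at most one normal form. The properties used by $\mathcal{F}$'s defining equations---being an atom versus a compound, and for a compound, the decomposition as $XY$---are all determined by this normal form. Hence $\mathcal{F}$ respects equality of inputs and is a genuine symbolic function on $SK$-combinators.

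For the second part, I would argue by contradiction. Suppose an $SK$-combinator $F$ satisfies $FXMN \rew^* \mathcal{F}(X,M,N)$ whenever the right-hand side is defined. Pick $X_1 := SKS$ and $X_2 := SKK$, both normal-form compounds, together with $N := K(SKK)$, the $SK$-encoding of the second projection $\lambda x.\lambda y.y$. Unwinding $\mathcal{F}$ and reducing further, $FX_1 M N \rew^* N(SK)S \rew^* S$ while $FX_2 M N \rew^* N(SK)K \rew^* K$. Now translate into the generous $\lambda$-calculus via $\ltrans\cdot\rtrans$, which sends application to application and preserves reduction by Lemma~\ref{lem:SKinLambda}. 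A direct computation shows $\ltrans SKS\rtrans$ and $\ltrans SKK\rtrans$ both $\beta$-reduce to $\lambda x.x$, so $\ltrans X_1\rtrans =_\beta \ltrans X_2\rtrans$. Congruence of $\beta$-equivalence then gives $\ltrans F X_1 M N\rtrans =_\beta \ltrans F X_2 M N\rtrans$, and reduction preservation together with Church--Rosser yield $\ltrans S\rtrans =_\beta \ltrans K\rtrans$, contradicting the fact that these are distinct $\beta$-normal forms.

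The main obstacle is the transfer of (in)equalities across the translation: I use reduction preservation in the forward direction to push $SK$-reductions through to $\beta$-reductions, and confluence of $\lambda$-calculus to extract a concrete clash from the $\beta$-equivalence of two already-reduced terms. A more direct route avoiding the detour through $\lambda$-calculus would be to prove a parametricity-style lemma---that any closed $SK$-combinator interacts with its argument only through application, so it cannot separate $X_1$ and $X_2$---but making this rigorous turns out to be essentially equivalent to the argument via extensional $\lambda$-calculus.
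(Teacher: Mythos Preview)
Your argument is correct and follows essentially the same route as the paper's: exploit that all $SKX$ translate to $\lambda x.x$ in $\lambda$-calculus, use reduction preservation of the translation, and then appeal to confluence to force $\ltrans S\rtrans =_\beta \ltrans K\rtrans$, a contradiction. The only cosmetic difference is that the paper runs the argument parametrically in $X$ (showing every $\ltrans X\rtrans$ would share a reduct with a single fixed term) whereas you instantiate with the two witnesses $X_1=SKS$ and $X_2=SKK$; you also spell out the ``symbolic computation'' half, which the paper leaves implicit.
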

\begin{proof}
  Suppose that there is an $SK$-combinator $F$ that represents
  ${\mathcal F}$.  Then, for any combinator $X$ it follows that
$F(SKX)S(KI) \rew  KI(SK)X  \rew X$. 
Translating this to $\l$-calculus as in Lemma~\ref{lem:beta*} yields
$\ltrans F(SKX)S(KI) \rtrans \rew \ltrans X\rtrans$ and also
$\ltrans F(SKX)S(KI) \rtrans 
=  \ltrans F \rtrans ~\ltrans (SKX)\rtrans~\ltrans S\rtrans~\ltrans KI\rtrans 
\rew \ltrans F \rtrans~ (\l x.x)~\ltrans S\rtrans~\ltrans KI\rtrans$. 
Hence, by confluence of reduction in $\l$-calculus, all $\ltrans
X\rtrans$ share a reduct with $\ltrans F \rtrans~ (\l x.x)~\ltrans
S\rtrans~\ltrans KI\rtrans$ but this is  impossible since
$\ltrans S\rtrans$ and $\ltrans K\rtrans$ are distinct normal forms.
Hence ${\mathcal F}$ cannot be represented by an  $SK$-combinator. 
\end{proof}

\subsection{$SF$-calculus}
\label{ssec:computation:sec:factor}

When considering intensionality in a combinatory logic it is tempting to specify
a factorisation combinator $F$ as a representative for ${\mathcal F}$.
However, ${\mathcal F}$ is defined using partially applied operators,
which cannot be known until all reduction rules are given,
including those for $F$.
This circularity of definition is broken by beginning with a syntactic
characterisation of the combinators that are to be factorable.
 
The $SF$-calculus \cite{JayGW11} has {\em factorable forms} given by
$S \bnf  SM \bnf SMN \bnf F \bnf FM \bnf FMN$
and {\em reduction rules} 
\[
\begin{array}{rcll}
  SMNX &\rew& MX(NX)\\
  FOMN &\rew& M &\mbox{if $O$ is $S$ or $F$}\\
  F(XY)MN &\rew& NXY &\mbox{if $XY$ is a factorable form.}
\end{array}
\]

The expressive power of $SF$-calculus subsumes that of $SK$-calculus
since $K$ is here defined to be $FF$ and $I$ is defined to be $SKK$
as before.

\begin{lemma}
\label{lem:SK2SF}
There is a homomorphism (Definition~\ref{def:homo-seq}) from $SK$-calculus into $SF$-calculus.
\end{lemma}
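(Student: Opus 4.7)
My plan is to define the translation homomorphically with $\encode{S} = S$, $\encode{K} = FF$, and $\encode{MN} = \encode{M}\,\encode{N}$. Compositionality and preservation of application are then immediate from the definition, and name invariance is trivial because the source has no variables at this point. The renaming policy $\renpol$ is the identity.

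For operational correspondence in the forward direction, I would proceed by case analysis on the $SK$-reduction. An $SMNX \rew MX(NX)$ step matches exactly the identically-shaped rule in $SF$-calculus, since $S$ translates to itself. A $KXY \rew X$ step becomes $FF\,\encode{X}\,\encode{Y}$, and because $F$ is an atom I can apply the rule $FOMN \rew M$ with $O := F$ to obtain $\encode{X}$ in a single $SF$-step. Contextual closure is handled in the usual way since the translation preserves application. This gives $\encode{M} \rew \encode{M'}$ whenever $M \rew M'$, which is even stronger than what operational correspondence demands.

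For the reverse direction of operational correspondence, and for divergence reflection, the key technical point is an invariant on the shape of terms in the image of $\encode\cdot$: every occurrence of $F$ appears only as the head of a subterm of the form $FF$, never in isolation and never with a non-$F$ in its ``inner'' position. This invariant is preserved by both $SF$-reduction rules available on image terms, since the $S$-rule simply copies and rearranges existing subterms and the $FFMN \rew M$ rule drops two $F$'s and the subterm $N$. Given this invariant, the only $F$-reduction that can ever fire on a reachable term is $FFMN \rew M$, corresponding to a $K$-reduction in the source; no ``spurious'' compound-factorisation $F(XY)MN \rew NXY$ is triggered, because that would require $XY$ to be a factorable form sitting in the second argument position of a lone $F$, which the invariant forbids. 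Thus every $SF$-reduction from $\encode{M}$ matches a unique $SK$-reduction from $M$, proving operational correspondence and, by lifting this one-to-one matching to sequences, divergence reflection.

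The main obstacle I anticipate is establishing and carefully stating the shape invariant above: it is what rules out extraneous reductions introduced by the more permissive $SF$ factorisation rule. Once that invariant is in hand, the rest of the proof is a routine induction on reduction length, and the lemma follows.
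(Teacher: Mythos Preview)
Your proposal is correct and follows the same route as the paper, which simply observes that $K$ is represented by $FF$ in $SF$-calculus and states the lemma without proof. Your shape invariant---that in the image of the translation the only $F$-redex ever available is $FFMN \rew M$, so no spurious compound-factorisation can fire---is precisely the detail the paper leaves implicit.
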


\begin{theorem}
\label{thm:lambdav2SF}
There is a homomorphism (Definition~\ref{def:homo-seq}) from $\lambda_v$-calculus to $SF$-calculus.
\end{theorem}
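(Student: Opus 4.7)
The plan is simply to compose the two homomorphisms already established. From Theorem~\ref{thm:lambda2SK} (restricted to the $\lambda_v$-fragment, since $\rew_v \subseteq \rew$) we have a sequential homomorphism $\encode\cdot_1$ from $\lambda_v$-calculus into $SK$-calculus, and from Lemma~\ref{lem:SK2SF} we have a sequential homomorphism $\encode\cdot_2$ from $SK$-calculus into $SF$-calculus. Define the candidate translation by $\encode M := \encode{\encode M_1}_2$ and the renaming policy as the composition of the two renaming policies. The work is then to verify that the five properties of Definition~\ref{def:homo-seq} lift through composition.

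First I would handle preservation of application, which is immediate from the corresponding property of each component: $\encode{MN} = \encode{\encode M_1\,\encode N_1}_2 = \encode{\encode M_1}_2\,\encode{\encode N_1}_2 = \encode M\,\encode N$. Compositionality follows the same pattern at the level of $k$-ary operator contexts: the context for $\op$ under the composed encoding is obtained by plugging the $SK$-level contexts given by $\encode\cdot_1$ into the $SF$-level contexts given by $\encode\cdot_2$, after matching on the free names. Name invariance composes in the obvious way, since if $\sigma$ is injective then so is the induced $SK$-level substitution, and one gets $\encode{\sigma M} = \sigma''\encode M$ with $\sigma''$ defined from $\sigma$ via the composed renaming policy.

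The remaining properties, operational correspondence and divergence reflection, follow by chaining the two-level diagrams. For correspondence, a reduction $M \Redar_v M'$ lifts via $\encode\cdot_1$ to $\encode M_1 \Redar \encode{M'}_1$ in $SK$-calculus, which in turn lifts via $\encode\cdot_2$ to $\encode M \Redar \encode{M'}$ in $SF$-calculus; conversely, reductions from $\encode M$ can be pulled back first across $\encode\cdot_2$ and then across $\encode\cdot_1$. Divergence reflection is handled analogously: an infinite reduction sequence in $SF$-calculus starting from $\encode M$ forces, by divergence reflection of $\encode\cdot_2$, an infinite sequence in $SK$-calculus starting from $\encode M_1$, which by divergence reflection of $\encode\cdot_1$ forces $M \rew^\omega_v$.

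None of this is deep; the only point requiring a little care is that the sequential homomorphisms in this paper only insist on equality-style operational correspondence (behavioural equivalence on $\lambda_v$ and on the combinatory logics reduces essentially to convertibility), so the ``up to $\bisim$'' slack in the general definition does not accumulate awkwardly under composition. Thus the main obstacle is purely bookkeeping, and the theorem follows by composing Theorem~\ref{thm:lambda2SK} with Lemma~\ref{lem:SK2SF}.
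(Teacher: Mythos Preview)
Your proposal is correct and takes essentially the same approach as the paper: the paper's proof is the one-line ``By Theorem~\ref{thm:lambda2SK} and Lemma~\ref{lem:SK2SF},'' and you have simply unpacked the routine verification that the five clauses of Definition~\ref{def:homo-seq} survive composition. Your extra remark about restricting Theorem~\ref{thm:lambda2SK} to the $\lambda_v$-fragment (via $\rew_v\subseteq\rew$) is a point the paper leaves implicit.
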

\begin{proof}
By Theorem~\ref{thm:lambda2SK} and Lemma~\ref{lem:SK2SF}.
\end{proof}

\begin{lemma}
\label{lem:noSF2lambda}
There is no reduction preserving translation $\encode\cdot$ from $SF$-calculus to $\lambda_v$-calculus.
\end{lemma}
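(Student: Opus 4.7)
The plan is to argue by contradiction, adapting the strategy of Lemma~\ref{lem:noFinSK:1}. Suppose a reduction preserving translation $\encode\cdot$ from $SF$-calculus into $\lambda_v$-calculus exists. Since every $\lambda_v$-reduction is also a $\lambda$-reduction under the generous operational semantics, $\encode\cdot$ is equally a reduction preserving translation into $\lambda$-calculus, which is the setting that will let me invoke confluence.

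The central witness is exactly the one used in Lemma~\ref{lem:noFinSK:1}. In $SF$-calculus, for every combinator $X$ the reduction $F(SKX)S(KI) \rew^* X$ holds, since $SKX$ is a compound factorable form of shape $SMN$, so the compound rule gives $F(SKX)S(KI) \rew (KI)(SK)X$, which reduces through $KI(SK) \rew I$ and $IX \rew X$. By reduction preservation, $\encode{F(SKX)S(KI)} \rew^* \encode{X}$ in $\lambda$-calculus, and compositionality on application decomposes this as $\encode{F}\,\encode{SKX}\,\encode{S}\,\encode{KI}$. The key intermediate claim is that $\encode{SKX}$ reduces in $\lambda$-calculus to a term independent of $X$: because $SKXY \rew^* Y$ in $SF$-calculus, reduction preservation yields $\encode{SKX}\,\encode{Y} \rew^* \encode{Y}$ for every $Y$, and this uniform identity behaviour under the generous $\beta$-rule collapses $\encode{SKX}$ to $\lambda y.y$.

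With this in hand, $\encode{F(SKX)S(KI)}$ also reduces to the fixed term $\encode{F}\,(\lambda y.y)\,\encode{S}\,\encode{KI}$, independent of $X$. By confluence of $\lambda$-calculus, every $\encode{X}$ shares a common reduct with this single fixed term, and hence any two $\encode{X}$ and $\encode{X'}$ share a common reduct with each other. Instantiating $X$ to $S$ and to $K$ would force $\encode{S}$ and $\encode{K}$ to converge, contradicting their distinctness as $\lambda$-terms; distinctness is in turn enforced by the fact that $FSMN \rew M$ while $F(FF)MN \rew NFF$ assigns incompatible reduction behaviour to $\encode{S}$ and to $\encode{K}=\encode{FF}$.

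The main obstacle will be the intermediate collapse of $\encode{SKX}$ to a uniform identity: bare reduction preservation only delivers the applied behaviour $\encode{SKX}\,\encode{Y} \rew^* \encode{Y}$, and upgrading this to an actual reduct $\lambda y.y$ depends on moving from $\lambda_v$ to the generous $\lambda$-calculus (where the unrestricted $\beta$-rule permits the collapse), combined with the compositional decomposition of $\encode{SKX}$. This is exactly the step that makes the corresponding move in Lemma~\ref{lem:noFinSK:1} work for the fixed translation $\ltrans \cdot \rtrans$; carrying it out for an a priori arbitrary $\encode\cdot$ is where the real technical content of the proof resides.
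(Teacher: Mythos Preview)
Your proposal has two genuine gaps that prevent it from going through as written.

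First, you invoke ``compositionality on application'' to decompose $\encode{F(SKX)S(KI)}$ as $\encode{F}\,\encode{SKX}\,\encode{S}\,\encode{KI}$, but the hypothesis of the lemma is only that $\encode\cdot$ is \emph{reduction preserving}; nothing in the statement grants $\encode{MN} = \encode M\,\encode N$. Without that decomposition the rest of the argument cannot even begin.

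Second, and more seriously, even if compositionality were available, your ``key intermediate claim'' that $\encode{SKX}$ collapses to $\lambda y.y$ does not follow from the premises you give. From $\encode{SKX}\,\encode{Y} \rew^* \encode{Y}$ for every $Y$ you conclude that $\encode{SKX}$ reduces to the identity, but acting as the identity on the \emph{image} of $\encode\cdot$ does not force a term to reduce to $\lambda y.y$: the image need not exhaust $\lambda$-calculus, and in any case a term can exhibit identity behaviour on all arguments without itself reducing to the identity abstraction. In Lemma~\ref{lem:noFinSK:1} this step succeeds precisely because the translation $\ltrans\cdot\rtrans$ there is the \emph{fixed, explicit} translation of Section~\ref{sec:lambda=SK}, so $\ltrans SKX\rtrans$ can be computed directly and seen to $\beta$-reduce to $\lambda x.x$. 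For an a~priori arbitrary $\encode\cdot$ that explicit computation is unavailable, and you yourself concede that ``carrying it out for an a~priori arbitrary $\encode\cdot$ is where the real technical content of the proof resides'' --- but then do not supply that content. A proof plan that names its own missing step is not yet a proof.

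For comparison, the paper does not attempt to replay the argument of Lemma~\ref{lem:noFinSK:1} for an arbitrary translation at all: its proof is the single line ``By Lemma~\ref{lem:noFinSK:1}'', deferring entirely to the non-representability result (Theorem~3.2 of \cite{JayGW11}) rather than reproving it in this setting.
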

\begin{proof}
By Lemma~\ref{lem:noFinSK:1}.
\end{proof}

\begin{theorem}
\label{thm:noSF2lambda}
There is no homomorphism (Definition~\ref{def:homo-seq}) from $SF$-calculus to $\lambda_v$-calculus.
\end{theorem}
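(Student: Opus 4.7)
The plan is to derive the theorem as an immediate consequence of Lemma~\ref{lem:noSF2lambda}. A homomorphism in the sense of Definition~\ref{def:homo-seq} is in particular required to satisfy operational correspondence: for every $S \Redar_1 S'$ in the source, the translation must deliver $\encode{S} \Redar_2 \bisim_2 \encode{S'}$ in the target. This is, a fortiori, a reduction-preserving translation of $SF$-calculus into $\lambda_v$-calculus in essentially the sense excluded by Lemma~\ref{lem:noSF2lambda}, so no such homomorphism can exist.

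Concretely, I would first assume for contradiction that $\encode\cdot$ is a homomorphism from $SF$-calculus into $\lambda_v$-calculus. Next, from Definition~\ref{def:homo-seq} I would extract the preservation-of-application and operational-correspondence clauses and observe that together they yield the reduction-preserving behaviour needed to invoke Lemma~\ref{lem:noSF2lambda}, whose own proof ultimately rests on the confluence argument of Lemma~\ref{lem:noFinSK:1} applied to the $SF$-reductions $F(SKX)S(KI)\Redar X$. The existence of $\encode\cdot$ therefore produces the contradiction already packaged by Lemma~\ref{lem:noSF2lambda}.

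The only real subtlety, and hence the main potential obstacle, is that operational correspondence matches reductions only up to the target behavioural equivalence $\bisim_2$, which is slightly weaker than strict reduction preservation. To avoid reworking the confluence argument inside $\lambda_v$ (where the restriction to values complicates the reductions of $\encode{SKX}$), I would instead compose $\encode\cdot$ with the homomorphism from $\lambda_v$-calculus into $SK$-calculus provided by Theorem~\ref{thm:lambda2SK}, thereby obtaining a homomorphism $SF \to SK$. The image of the operator $F$ under this composition plays the role of an $SK$-combinator representing the symbolic factorisation function, and Lemma~\ref{lem:noFinSK:1} then delivers the contradiction directly in $SK$-calculus, bypassing any $\lambda_v$-specific issues with values and confluence.
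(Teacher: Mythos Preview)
Your proposal is correct and matches the paper's approach: the paper's proof is the single sentence ``Lemma~\ref{lem:noSF2lambda} shows that operational correspondence is impossible,'' which is precisely the content of your first two paragraphs.

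Your third paragraph goes beyond the paper by worrying about the $\bisim_2$ slack in operational correspondence and proposing to bypass it via composition with the homomorphism into $SK$-calculus from Theorem~\ref{thm:lambda2SK}; the paper does not address this subtlety at all and simply identifies operational correspondence with reduction preservation. One small caveat on your detour: Theorem~\ref{thm:lambda2SK} is stated for full $\lambda$-calculus rather than $\lambda_v$-calculus, so you would need to argue (or just note, as the paper does implicitly in the proof of Theorem~\ref{thm:lambdav2SF}) that the same translation serves for $\lambda_v$. Also, homomorphisms are not literally asserted to compose in the paper, so strictly speaking you would be verifying the five clauses for the composite rather than appealing to a stated closure property; none of this is hard, but it is more work than the paper invests in this theorem.
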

\begin{proof}
Lemma~\ref{lem:noSF2lambda} shows that operational correspondence is impossible.
\end{proof}

This completes the top edge of the computation square by showing that
$SF$-calculus subsumes $\lambda_v$-calculus and that the subsumption is
irreversible.
Indeed, these results hold for $\lambda$-calculus \cite[Theorem~5.2.6]{GivenWilsonPHD} and
$SK$-calculus (by Lemma~\ref{lem:noFinSK:1}) as well.

\section{Concurrent Extensional Computation}
\label{sec:cec}

The bottom left corner of the computation square considers extensional concurrent computation,
here defined to be extensional process calculi that subsume $\lambda$-calculus.
The $\pi$-calculus \cite{milner.parrow.ea:calculus-mobile} holds a pivotal r\^ole amongst process
calculi due to popularity, being the first to represent topological changes, and subsuming
$\lambda_v$-calculus \cite{90426}.
Note that although there are many $\pi$-calculi, the one here is that used by Milner so as to 
more easily exploit previous results \cite{90426} (and here augmented with a success process $\ok$).

The processes for the $\pi$-calculus are given as follows and exploit a class of names (denoted $m,n,x,y,z,\ldots$ similar to variables in the $\lambda$-calculus):
\begin{equation*}
P \quad ::=\quad  \zero \ \bnf\  P\!\bnf\!P \ \bnf\  !P \ \bnf\  \res a P \ \bnf\  \iap a b.P \ \bnf\  \oap a b.P\bnf \ok\; .
\end{equation*}
The names of the $\pi$-calculus are used for channels of communication and for information being communicated. 
The {\em free names} of a process ${\sf fn}(P)$ are as usual. 
{\em Substitutions} in the $\pi$-calculus are partial functions that map names to names,
with domain, range, free names, names, and avoidance, all straightforward adaptations from
substitutions of the $\lambda$-calculus.
The application of a substitution to a process is defined in the usual manner.
Issues where substitutions must avoid restricted or input names are handled by $\alpha$-conversion $=_\alpha$ that is the congruence relation defined in the usual manner.
The general {\em structural equivalence relation} $\equiv$ is defined by:
\begin{eqnarray*}
&P\bnf \stoppr \equiv P \qquad
P \bnf Q \equiv Q \bnf P\qquad
P \bnf (Q \bnf R) \equiv (P \bnf Q) \bnf R&\\&
!P \equiv P\bnf !P\qquad
\res n \stoppr \equiv \stoppr\qquad
\res n \res m P \equiv \res m \res n P&\\&
P\, | \, \res n Q \equiv \res n (P\, |\, Q)
\ \ \mbox{if } n \!\not\in\! {\sf fn}(P)&
\end{eqnarray*}

The $\pi$-calculus has one {\em reduction rule} given by
\begin{eqnarray*}
\iap a b.P \bnf \oap a c.Q &\quad\redar\quad& \{c/b\}P\bnf Q \; .
\end{eqnarray*}
The reduction rule is then closed under parallel composition, restriction and structural equivalence to yield the reduction relation $\redar$ as follows:
\begin{eqnarray*}
\begin{array}{c}
\Rule{}
{P \redar P'}{P\bnf Q \redar P'\bnf Q}{}\qquad
\Rule{}
{P \redar P'}{\res n P \redar \res n P'}{}\qquad
\Rule{}
{P \equiv Q \quad Q \redar Q'\quad Q' \equiv P'}{P \redar P'}{} \; .
\end{array}
\end{eqnarray*}

Now that the $\pi$-calculus and process calculus concepts are recalled, it remains to
demonstrate that Milner's encoding \cite{90426} can meet the criteria for a parallel encoding.
As the $\beta_v$-reduction rule depends upon the argument being a value the translation into $\pi$-calculus must be able to recognise values. Thus, Milner defines the following
\begin{eqnarray*}
\xtrans {y:=\lambda x.t} \ \define\  !y(w).w(x).w(c).\xtrans t _c&\qquad&
\xtrans {y:=x} \ \define\ !y(w).\oap x w \; .
\end{eqnarray*}
Also the following translation of $\lambda_v$-terms
\begin{equation*}
\begin{array}{rcll}
\xtrans {v} _c &\define& \res y \oap c y .\xtrans {y:=v} & \mbox{$y$ not free in $v$}\\
\xtrans {s~t} _c &\define& \res q \res r ({\sf ap}(c,q,r) \bnf \xtrans s _q \bnf \xtrans t _r)\\
{\sf ap}(p,q,r) &\define& \iap q y .\res v \oap y v .\iap r z .\oap v z .\oap v p \; .
\end{array}
\end{equation*}

\begin{lemma}
\label{lem:l2pi-red}
The translation $\encode \cdot _c$ preserves and reflects reduction.
That is:
\begin{enumerate}
\item If $s\redar_v t$ then $\encode s _c \Redar\bisim \encode t _c$;
\item if $\encode s _c \redar Q$ then there exists $Q'$ and $s'$ such that
      $Q\Redar Q'$ and $Q'\bisim \encode {s'} _c$ and either $s\redar_v s'$ or $s=s'$.
\end{enumerate}
\end{lemma}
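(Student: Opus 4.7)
The plan is to prove the two statements independently. For part (1), I would proceed by induction on the derivation of $s \redar_v t$ in the reduction relation of $\lambda_v$-calculus. The interesting base case is the $\beta_v$-axiom $(\lambda x.t) v \redar_v \{v/x\}t$; the inductive cases handle the two contextual rules for application and reduce routinely by applying the induction hypothesis inside a parallel context (since the translation of an application is a parallel composition of independent translations of the function and the argument, protected by a fresh restriction and the administrative process $\mathsf{ap}(c,q,r)$). For part (2), I would case analyse the first step $\encode{s}_c \redar Q$ by induction on the structure of $s$: either $s$ is a value, in which case $\encode{s}_c$ has a unique possible communication that simply delivers the fresh name to $c$ and the process is already in a passive, bisimilar form; or $s = s_1\, s_2$ is an application, in which case the communication occurs in $\encode{s_1}_q$, in $\encode{s_2}_r$, or in the administrative interaction between them (and possibly $\mathsf{ap}$).

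The key computation for part (1) is to trace through the protocol induced by $\mathsf{ap}(c,q,r)$ on $\encode{(\lambda x.t)\,v}_c$. After a bounded number of administrative communications (the function name traversing $q$, the fresh private channel $v$ traversing the replicated server on that name, the argument name traversing $r$, and finally the continuation $c$ traversing $v$) the redex evolves to $\res{*}(\encode{t}_c\{\,/x\} \bnf !R_1 \bnf !R_2)$ for a suitable substitution on the input channel for $x$, where the leftover replicated servers are garbage that is bisimilar to the servers present in $\encode{\{v/x\}t}_c$. Bisimilarity, rather than identity, is needed here precisely because the syntactic shapes of the two encodings differ by a garbage-collected set of restricted replicated resources; this is standard in Milner's setting and is handled by the reference behavioural equivalence $\bisim$.

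For part (2), the case that $s$ is a value is immediate: $\encode{s}_c$ can only perform the output on $c$, after which the resulting $Q$ is already of the correct form with $s' = s$. When $s = s_1\,s_2$, a reduction strictly inside $\encode{s_1}_q$ or $\encode{s_2}_r$ is dispatched by induction, giving an $s_1'$ or $s_2'$ and hence an $s'$ via the congruence rules of $\redar_v$. The critical case is a communication that crosses the $\mathsf{ap}$ barrier: such a step can only engage the $\iap{q}{y}$ input of $\mathsf{ap}$ with an output $\res{y}\oap{q}{y}$ generated by $\encode{s_1}_q$, which means $s_1$ has reduced to a value. If $s_2$ has not yet reduced to a value, this administrative communication corresponds to $s = s'$, and one shows that $Q$ can reduce further to something bisimilar to $\encode{s}_c$. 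If $s_2$ is also a value, the same administrative chain considered in part (1) fires and completes a full $\beta_v$-redex, giving $s' = \{v_2/x\}t$ when $s_1$ is $\lambda x.t$.

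The main obstacle is the bisimilarity calculation in the key case of part (1): one must show that after all administrative communications have fired, the residual process, with its garbage restricted replicated servers and its communication-implemented substitution of the argument's fresh output channel for $x$, is $\bisim$-equivalent to $\encode{\{v/x\}t}_c$. This is standard but tedious and requires a separate substitution lemma saying that $\encode{\{v/x\}t}_c \bisim \res{y}(\encode{t}_c\{y/x\} \bnf \encode{y := v})$ when $y$ is fresh, proved by induction on $t$ and relying on the replicated nature of value servers so that copies of $\encode{y := v}$ can be supplied as needed. With this substitution lemma in hand both directions of the lemma follow by the case analysis above.
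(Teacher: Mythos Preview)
Your approach is correct and is essentially the direct reconstruction of what the paper invokes by citation: for part~(1) the paper simply appeals to Milner's Theorem~7.7, which is precisely the induction on $\redar_v$ together with the substitution lemma you isolate; for part~(2) the paper's sketch (the step must come from a translated application, and either it is purely administrative so $s'=s$, or it completes a $\beta_v$-step) is the same case analysis you spell out. So the two proofs coincide in substance; yours is self-contained while the paper outsources the heavy lifting to Milner.

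One small slip to fix: in your part~(2), when $s$ is a value you say $\encode{s}_c$ ``has a unique possible communication that simply delivers the fresh name to $c$''. In the reduction semantics used here that is not a reduction: $\encode{v}_c = \res y\,\oap c y.\encode{y:=v}$ is an output prefix with no matching input on $c$ in scope, so $\encode{v}_c$ has \emph{no} internal step at all and the hypothesis $\encode{s}_c \redar Q$ is vacuous in this case. This does not damage the argument---it actually makes the case trivial---but the phrasing should be corrected so that you are not implicitly switching to a labelled-transition view.
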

\begin{proof}
The first part can be proved by exploiting Milner's Theorem~7.7 \cite{90426}.
The second is by considering the reduction $\encode s _c\redar Q$ which must arise
from the encoding of an application. It is then straightforward to show that either:
the reductions $Q\Redar Q'$ correspond only to translated applications and thus
$Q'\bisim \encode s _c$; or
the reductions are due to a $\lambda_v$-abstraction and thus
$Q'\bisim\encode {s'}_c$ and $s\redar_v s'$.
\end{proof}

\begin{theorem}
\label{thm:lambda2pi}
The translation $\xtrans\cdot _c$ is a parallel encoding (Definition~\ref{def:pe}) from $\lambda_v$-calculus to $\pi$-calculus.
\end{theorem}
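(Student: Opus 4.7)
The plan is to verify the five clauses of Definition~\ref{def:pe} for Milner's translation $\xtrans\cdot_c$, using Lemma~\ref{lem:l2pi-red} as the main workhorse. Parallelisation is immediate by inspection: the clause for application has exactly the required shape $\res q \res r ({\sf ap}(c,q,r) \bnf \xtrans s _q \bnf \xtrans t _r)$, with ${\sf ap}$ playing the role of $\mathcal{A}$. Compositionality then follows by reading the defining equations as context definitions: the application clause gives a binary context, the abstraction clause $\xtrans{\lambda x.t}_c = \res y \oap c y.!y(w).w(x).w(c').\xtrans{t}_{c'}$ gives a unary context around $\xtrans{t}_{c'}$, and the variable clause yields a nullary context depending only on the name. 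The renaming policy $\renpol$ that sends each source variable to a single target name suffices.

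Name invariance is established by pushing a name substitution $\sigma$ through the translation structurally. For injective $\sigma$, the renaming $\sigma'(\renpol(x)) = \renpol(\sigma(x))$ commutes with $\xtrans\cdot_c$ clause by clause, because none of the bound names introduced by the encoding coincide with the free names (up to $\alpha$-conversion). For non-injective $\sigma$ the coincidence of source names may cause replicated inputs at the same channel to become indistinguishable, but these are easily shown bisimilar to the direct application of $\sigma'$.

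Operational correspondence is exactly what Lemma~\ref{lem:l2pi-red} delivers: the first clause gives the forward direction (each $\beta_v$-step is simulated by a nonempty sequence of $\pi$-reductions ending at a process bisimilar to the encoding of the reduct), and the second clause gives the backward direction (every reduction of $\xtrans s _c$ can be extended to the encoding of some $s'$ with either $s \redar_v s'$ or $s = s'$). The latter handles the administrative reductions arising from ${\sf ap}$ synchronising with the encoded subterms.

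Divergence reflection will be the main obstacle, since operational correspondence alone only guarantees that each encoded reduction can be completed to a source step, not that infinite chains of purely administrative reductions are excluded. The plan is to introduce a measure on $\pi$-processes of the form reachable from $\xtrans s _c$ — essentially counting pending ${\sf ap}$ synchronisations and pending value-passing handshakes within a given applicative skeleton — and to show that this measure strictly decreases along every administrative step that does not correspond to a $\beta_v$-reduction. Consequently, an infinite reduction sequence of $\xtrans s _c$ must contain infinitely many $\beta_v$-mimicking steps, which by the second clause of Lemma~\ref{lem:l2pi-red} witness an infinite sequence of $\redar_v$-reductions from $s$.
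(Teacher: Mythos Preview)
Your proposal is correct and follows essentially the same approach as the paper: compositionality, parallelisation, and name invariance are argued by inspection of the defining clauses; operational correspondence is delegated to Lemma~\ref{lem:l2pi-red}; and divergence reflection is handled by bounding the administrative (non-$\beta_v$) reductions. The only difference is granularity: where you sketch an explicit decreasing measure on pending ${\sf ap}$ and value-passing synchronisations, the paper simply observes that the administrative reductions introduced by translated applications are bounded by the size of the source term --- which is precisely the bound your measure would establish.
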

\begin{proof}
Compositionality, parallelisation, and name invariance hold by construction.
Operational correspondence follows from Lemma~\ref{lem:l2pi-red}.
Divergence reflection can be proved by observing that the only reductions introduced
in the translation that do not correspond to reductions in the source language are
from translated applications, and these are bounded by the size of the source term.
\end{proof}

There is some difficulty in attempting to define the analogue of a parallel encoding or
homomorphism from a language with a parallel composition operator into a language without.
However, this difficulty can be avoided by observing that any valid encoding, parallel
encoding, or homomorphism must preserve reduction.
Reduction preservation can then be exploited to show when an encoding is impossible.
Here this is by exploiting Theorem~14.4.12 of Barendregt \cite{Barendregt85}, showing that
$\lambda$-calculus is unable to render concurrency or support concurrent computations.

\begin{theorem}
\label{thm:no-pi2lambda}
There is no reduction preserving encoding of $\pi$-calculus into $\lambda$-calculus.
\end{theorem}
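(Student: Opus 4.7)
The plan is to derive a contradiction from the Church--Rosser property of $\lambda$-calculus (Barendregt's Theorem~14.4.12, cited just before the theorem statement) against the essential non-confluence of $\pi$-calculus. The remark immediately preceding the theorem makes clear that ``reduction preserving'' is to be read in the sense of operational correspondence in \emph{both} directions; that two-sided requirement is exactly what prevents a trivial collapsing translation from vacuously validating the statement, and it is what the argument below exploits.

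First I would exhibit a $\pi$-process with a critical pair, such as
\[
P \;\define\; \oap a b \bnf \iap a x.P_1 \bnf \iap a x.P_2,
\]
where $P_1$ and $P_2$ are chosen as simple, structurally distinct normal forms (for instance, outputs on disjoint fresh channels not occurring elsewhere). Then $P$ admits two competing one-step reductions, yielding the $\pi$-normal forms
\[
Q_1 \;\define\; \{b/x\}P_1 \bnf \iap a x.P_2 \qquad \text{and} \qquad Q_2 \;\define\; \iap a x.P_1 \bnf \{b/x\}P_2,
\]
each irreducible (there is no longer an output on $a$), and behaviourally separated under $\bisim$ because they differ on free names and on their remaining reaction capabilities. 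Assuming for contradiction that an encoding $\encode\cdot$ exists, the forwards direction of operational correspondence gives $\encode P \Redar \bisim \encode{Q_1}$ and $\encode P \Redar \bisim \encode{Q_2}$ in $\lambda$-calculus; confluence then produces a common $\lambda$-reduct $R$ of $\encode{Q_1}$ and $\encode{Q_2}$.

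The contradiction is extracted by reading each reduction $\encode{Q_i} \Redar R$ through the \emph{backwards} direction of operational correspondence: since $Q_i$ is a $\pi$-normal form, the only admissible source-side witness is $Q_i$ itself, whence $R \bisim \encode{Q_i}$. Combining both indices gives $\encode{Q_1} \bisim \encode{Q_2}$, which by name invariance and compositionality pulls back to an identification of $Q_1$ and $Q_2$ incompatible with their behavioural distinctness. The main obstacle will be precisely this last pull-back, i.e.\ guaranteeing that $Q_1, Q_2$ are chosen so that no plausible target equivalence can merge their images; this is exactly why the paper leans on the full two-sided operational correspondence rather than bare syntactic reduction preservation. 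Once that reading is fixed, the confrontation of Church--Rosser with a $\pi$-calculus critical pair closes the argument in one line.
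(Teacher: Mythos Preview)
Your approach diverges from the paper's and, as it stands, has a real gap.

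First, a factual point: Barendregt's Theorem~14.4.12 is not the Church--Rosser theorem. It is the non-definability of parallel-or in $\lambda$-calculus, and that is precisely the lever the paper pulls. The paper exhibits a $\pi$-process computing parallel-or and concludes that any reduction-preserving image would give a $\lambda$-term computing parallel-or, contradicting~14.4.12. No confluence-versus-critical-pair argument is used.

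Second, your confluence argument does not close. From forward operational correspondence you obtain $\encode P \Redar T_i$ with $T_i \bisim_2 \encode{Q_i}$; confluence then lets you join the $T_i$, and with some work (and a specific choice of $\bisim_2$ such as $\beta$-equality) you can indeed squeeze out $\encode{Q_1} \bisim_2 \encode{Q_2}$. The problem is the last step: nothing in compositionality or name invariance lets you ``pull back'' $\encode{Q_1} \bisim_2 \encode{Q_2}$ to $Q_1 \bisim_1 Q_2$. Those two clauses constrain how the translation is \emph{built}, not that it is injective up to equivalence. The only clause in Definition~\ref{def:ve} that separates source terms via their images is success sensitiveness, which the paper explicitly drops in this cross-paradigm setting, and which you do not invoke. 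You yourself flag this pull-back as ``the main obstacle'' and then assert it away; that is exactly where the proof breaks.

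If you want to rescue a confluence-style argument, you would need an additional adequacy hypothesis (e.g.\ reflection of $\bisim$, or a success predicate on the $\lambda$-side together with success sensitiveness) that is not part of the framework here. The paper sidesteps all of this by appealing to the parallel-or separation, which needs only forward reduction preservation.
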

\begin{proof}
Define the parallel-or function and show that it can be represented in $\pi$-calculus
but not $\lambda$-calculus.
The parallel-or function is a function $g(x,y)$ that satisfies the following three rules
$g(\bot,\bot) \ \rew^*\ \bot$ and 
$g({\texttt T},\bot) \ \rew^*\ {\texttt T}$ and
$g(\bot,{\texttt T}) \ \rew^*\ {\texttt T}$
where $\bot$ represents non-termination and {\texttt T} represents true.
Such a function is trivial to encode in $\pi$-calculus by
$g(n_1,n_2)= G = \iap {n_1} x.\oap m x.\zero\bnf \iap {n_2} x.\oap m x.\zero$.
Consider $G$ in parallel with two processes $P_1$ and $P_2$ that output their result on $n_1$
and $n_2$, respectively.
If either $P_1$ or $P_2$ outputs {\texttt T} then $G$ will also output {\texttt T} along $m$.
Clearly $\pi$-calculus can represent the parallel-or function, and since Barendregt's
Theorem~14.4.12 shows that $\lambda$-calculus cannot, there cannot be any reduction preserving
encoding of $\pi$-calculus into $\lambda$-calculus.
\end{proof}

\section{Concurrent Intensional Computation}
\label{sec:cic}

Intensionality in sequential computation yields greater expressive power so it is natural to consider intensional concurrent computation.
Intensionality in CPC is supported by a generalisation of pattern-matching to symmetric {\em pattern-unification} that provides the basis for defining interaction.

\subsection{Concurrent Pattern Calculus}
\label{ssec:cpc}

The {\em patterns} (meta-variables $p,p',p_1,q,q',q_1,\ldots$)
are built using a class of {\em names} familiar from $\pi$-calculus and
have the following forms
\begin{eqnarray*}
p &::=& \lambda x \bnf x \bnf \pro x \bnf p\bullet p
\end{eqnarray*}
Binding names $\lambda x$ denote an input sought by the pattern.
Variable names $x$  may be output or tested for equality.
Protected names $\pro x$ can only be tested for equality.
A compound combines two patterns $p$ and $q$, its {\em components}, into a pattern $p\bullet q$ and is left associative.
The {\em atoms} are patterns that are not compounds and the atoms $x$ and $\pro x$ are defined to {\em know} $x$.
The binding names of a pattern must be pairwise distinct.

A {\em communicable} pattern contains no binding or protected names.
Given a pattern $p$, the
binding names ${\sf bn}(p)$,
variable names ${\sf vn}(p)$,
and protected names ${\sf pn}(p)$,
are as expected, with the free names ${\sf fn}(p)$ being the union of variable and protected names.

A {\em substitution} $\sigma$ (also denoted
$\sigma_1,\rho,\rho_1,\theta,\theta_1,\ldots$) is a partial function
from names to communicable patterns.
Otherwise substitutions and their
properties are familiar from earlier sections and are applied to patterns
in the obvious manner.
(Observe that protection can be extended to a communicable pattern by
$\pro {p\bullet q}=\pro p\bullet \pro q$ in the application of a
substitution to a protected name.)

The {\em symmetric matching} or {\em unification} $\{p \pmatch q\}$ of
two patterns $p$ and $q$ attempts to unify $p$ and $q$ by generating
substitutions upon their binding names. When defined, the result is
some pair of substitutions whose domains are the binding names of $p$
and of $q$, respectively.
The rules to generate the substitutions are:
\begin{eqnarray*}
\begin{array}{rcll}
\{x\pmatch x\} =
\{x\pmatch \pro{x}\} &=&
\{\pro{x}\pmatch x\} =
\{\pro{x}\pmatch \pro{x}\} \define & (\{\}, \{\})\\
\{\lambda x\pmatch q\} &\define&  (\{q/x\}, \{\}) & \mbox{if $q$ is communicable}\\
\{p\pmatch \lambda x\} &\define&  (\{\}, \{p/x\}) & \mbox{if $p$ is communicable}\\
\{p_1\bullet p_2\pmatch q_1\bullet q_2\} &\define&  (\sigma_1\cup\sigma_2\,,\, \rho_1\cup\rho_2)
			& \mbox{if } \{p_i\pmatch q_i\}= (\sigma_i, \rho_i) \mbox{ for } i \in \{1,2\}\\
\end{array}
\end{eqnarray*}
Two atoms unify if they know the same name. A binding name unifies with any communicable pattern to
produce a binding for its underlying name.  Two compounds unify if
their corresponding components do; the resulting substitutions are
given by taking unions of those produced by unifying the components.
Otherwise the patterns cannot be unified and the unification is undefined.

The processes of CPC are the same as $\pi$-calculus except the
input and output are replaced by the {\em case} $p\pre P$ with pattern $p$ and body $P$.
A case with the null process as the body $p\to \zero$ may also be written $p$ when no
ambiguity may occur.

The free names of processes, denoted ${\sf fn}(P)$, are defined as
usual for all the traditional primitives and
${\sf fn}(p\pre P) \quad =\quad {\sf fn}(p)\cup({\sf fn}(P)\backslash{\sf bn}(p))$
for the case. As expected the binding names of the pattern bind their free occurrences in the body.
The application $\sigma P$ of a substitution $\sigma$ to a process $P$ is defined in the usual
manner to avoid name capture. For cases this ensures that substitution avoids the binding names in the pattern:
$\sigma(p\to P) = (\sigma p)\to (\sigma P)$ if $\sigma$ avoids ${\sf bn}(p)$.
Renaming via $\alpha$-conversion is defined in the usual manner
\cite{GivenWilsonGorlaJay10,GivenWilsonPHD,givenwilson:hal-00987578}.
The general {\em structural equivalence relation} $\equiv$ is defined
just as in $\pi$-calculus.

\label{sec:reduction-new}
CPC has one {\em interaction axiom} given by
\begin{eqnarray*}
\begin{array}{rcll}
(p\pre P)\bnf(q\pre Q) &\quad\redar\quad& (\sigma P)\bnf(\rho Q) & \qquad\mbox{if $\{p\pmatch q\}=(\sigma,\rho)$} \; .
\end{array}
\end{eqnarray*}
It states that if the unification of two patterns $p$ and $q$ is
defined and generates $(\sigma, \rho)$, then apply the
substitutions $\sigma$ and $\rho$ to the bodies $P$ and $Q$,
respectively.  If the matching of $p$ and $q$ is undefined then no
interaction occurs.
The interaction rule is then closed under
parallel composition, restriction and structural equivalence in the usual manner.
The reflexive and transitive closure of $\redar$ is denoted $\Redar$.
Finally, the reference behavioural equivalence relation $\bisim$ for CPC is already
well detailed \cite{GivenWilsonPHD,GivenWilsonGorla13,givenwilson:hal-00987578}.

\subsection{Completing the Square}
\label{ssec:completing}

Support for both intensionality and concurrency places CPC at the bottom right corner of the computation square.
This section shows how $SF$-calculus and $\pi$-calculus can both be subsumed by CPC, and thus completes the computation square.

Down the right side of the square there is a parallel encoding from $SF$-calculus into CPC that also maps the combinators $S$ and $F$ to reserved names $S$ and $F$, respectively.
The impossibility of finding a parallel encoding of CPC into $SF$-calculus is proved in the same manner as the relation between $\lambda_v$-calculus and $\pi$-calculus.
Interestingly, in contrast with the parallel encoding of $\lambda$-calculus into $\pi$-calculus,
the parallel encoding of $SF$-calculus into CPC does {\em not} fix a reduction strategy for $SF$-calculus.
This is achieved by exploiting the intensionality of CPC to directly encode the reduction rules for $SF$-calculus into an $SF$-reducing process, or $SF$-machine.
In turn, this process can then operate on translated combinators and so support reduction and rewriting.

The square is completed by showing a homomorphism from $\pi$-calculus into CPC, and by showing that there cannot be any homomorphism (or indeed a more general valid encoding) from CPC into $\pi$-calculus.

\subsubsection*{$SF$-calculus.}
\label{sssec:SF2CPC}

The $SF$-calculus combinators  can be easily encoded into patterns by defining the {\em construction} $\ytrans\cdot$, exploiting reserved names $S$ and $F$, as follows
\begin{eqnarray*}
\ytrans S \define S\qquad
\ytrans F \define F\qquad
\ytrans {M N} \define \ytrans M \bullet\ytrans N\; .
\end{eqnarray*}
Observe that the first two rules map the operators to the same names. The third rule maps application to a compound of the components $\ytrans M$ and $\ytrans N$.

By representing $SF$-calculus combinators in the pattern of a CPC case,
the reduction is driven by cases that recognise a reducible structure and perform the appropriate operations.
The reduction rules can be captured by matching on the structure of the left hand side of the rule and reducing to the structure on the right. So (considering each possible instance for the $F$ reduction rules) they can be encoded by cases as follows
\small
\begin{eqnarray*}
S\bullet \lambda m\bullet \lambda n\bullet \lambda x &\to& m\bullet x\bullet (n\bullet x)\\
F\bullet S\bullet \lambda m\bullet \lambda n &\to& m\\
F\bullet F\bullet \lambda m\bullet \lambda n &\to& m\\
\dots\\
F\bullet (F\bullet\lambda p\bullet \lambda q)\bullet \lambda m\bullet \lambda n &\to&n\bullet (F\bullet p)\bullet q\; .
\end{eqnarray*}
\normalsize
These processes capture the reduction rules, matching the pattern for the left hand side and transforming it to the structure on the right hand side.
Of course these process do not capture the possibility of reduction of a sub-combinator, so further rules are required.
Rather than detail them all, consider the example of a reduction $MNOP \rew MN'OP$ that can be captured by
\begin{eqnarray*}
\lambda m\bullet (\lambda u\bullet \lambda v\bullet \lambda w\bullet \lambda x) \bullet \lambda o\bullet \lambda p
\to u\bullet v\bullet w\bullet x \to \lambda z \to m \bullet z\bullet o\bullet p
\end{eqnarray*}
This process unifies with a combinator $MXOP$ where $X$ is reducible (observable from the structure), here binding the components of $X$ to four names $u$, $v$, $w$ and $x$. These four names are then shared as a pattern, which can then be unified with another process that can perform the reduction. The result will then (eventually) unify with $\lambda z$ and be substituted back into $m\bullet z\bullet o\bullet p$ to complete the reduction.

\begin{figure}[t]
\begin{equation*}
\begin{array}{cl}
& !\lambda c\bullet(S\bullet \lambda m\bullet \lambda n\bullet \lambda x) \to c\bullet(m\bullet x\bullet (n\bullet x))\\
|&!\lambda c\bullet(F\bullet S\bullet \lambda m\bullet \lambda n) \to c\bullet m\\
|&!\lambda c\bullet(F\bullet F\bullet \lambda m\bullet \lambda n) \to c\bullet m\\
|&!\lambda c\bullet(F\bullet (S\bullet \lambda q)\bullet \lambda m\bullet \lambda n) \to c\bullet(n\bullet S\bullet q)\\
|&!\lambda c\bullet(F\bullet (F\bullet \lambda q)\bullet \lambda m\bullet \lambda n) \to c\bullet(n\bullet F\bullet q)\\
|&!\lambda c\bullet(F\bullet (S\bullet\lambda p\bullet \lambda q)\bullet \lambda m\bullet \lambda n)
		\to c\bullet(n\bullet (S\bullet p)\bullet q)\\
|&!\lambda c\bullet(F\bullet (F\bullet\lambda p\bullet \lambda q)\bullet \lambda m\bullet \lambda n)
		\to c\bullet(n\bullet (F\bullet p)\bullet q)\\
|&!\lambda c\bullet(\lambda u\bullet \lambda v\bullet\lambda w\bullet \lambda x\bullet\lambda y)\\
&\quad\to\res d d\bullet (u\bullet v\bullet w\bullet x)\to d\bullet \lambda z
		\to c\bullet(z\bullet y)\\
|&!\lambda c\bullet(\lambda m\bullet \lambda n \bullet \lambda o\bullet (\lambda u\bullet \lambda v\bullet \lambda w\bullet \lambda x))\\
&\quad\to\res d d\bullet(u\bullet v\bullet w\bullet x)\to d\bullet\lambda z
			\to c\bullet (m \bullet n\bullet o\bullet z)\\
|&!\lambda c\bullet(\lambda m\bullet \lambda n \bullet (\lambda u\bullet \lambda v\bullet \lambda w\bullet \lambda x)\bullet \lambda p)\\
&\quad\to\res d d\bullet(u\bullet v\bullet w\bullet x)\to d\bullet\lambda z
			\to c\bullet (m \bullet n\bullet z\bullet p)\\	
|&!\lambda c\bullet(\lambda m\bullet (\lambda u\bullet \lambda v\bullet \lambda w\bullet \lambda x) \bullet \lambda o\bullet \lambda p)\\
&\quad\to\res d d\bullet(u\bullet v\bullet w\bullet x)\to d\bullet\lambda z
			\to c\bullet (m \bullet z\bullet o\bullet p)
\end{array}
\end{equation*}
\caption{The $SF$-reducing process ${\mathcal R}$.}
\label{fig:SF-red-paper}
\end{figure}

To exploit these processes in constructing a parallel encoding requires the addition of a name,
used like a channel, to control application. Thus, prefix each pattern that matches the structure
of an $SF$-combinator with a binding name $\lambda c$ and add this to the result, e.g.
$\lambda c\bullet (F\bullet S\bullet \lambda m\bullet \lambda n) \to c\bullet m$.
Now the processes that handle each possible reduction rule can be placed under a replication
and in parallel composition with each other. This yields the $SF$-reducing process ${\mathcal R}$
as shown in Figure~\ref{fig:SF-red-paper} where the last four replications capture reduction of
sub-combinators.

The translation $\xtrans\cdot _c$ from $SF$-combinators into CPC processes is here parametrised by a name $c$ and combines application with a process ${\sf ap}(c,m,n)$.
This is similar to Milner's encoding from $\lambda_v$-calculus into $\pi$-calculus
and
 allows the parallel encoding to exploit compositional encoding of sub-terms as processes and thus parallel reduction, while preventing confusion of application.

The translation $\xtrans\cdot _c$ of $SF$-combinators into CPC, exploiting the $SF$-reducing process ${\mathcal R}$ and reserved names $S$ and $F$, is defined as follows:
\begin{equation*}
\begin{array}{c}
\xtrans S _c \define c\bullet S \bnf {\mathcal R}\qquad\qquad
\xtrans F _c \define c\bullet F \bnf {\mathcal R}\\
\xtrans {MN} _c \define \res m\res n ({\sf ap}(c,m,n)\bnf \xtrans M _m \bnf \xtrans N _n )\\
{\sf ap}(c,m,n) \define m\bullet \lambda x\to n\bullet \lambda y\to c \bullet (x\bullet y)\bnf  {\mathcal R}\; .
\end{array}
\end{equation*}

The following lemmas are at the core of the operational correspondence and divergence
reflection components of the proof of valid encoding, similar to Milner's Theorem~7.7 \cite{90426}.
Further, it provides a general sense of how to capture the reduction of combinatory logics or
similar rewrite systems.
(Note that the results exploit that ${\mathcal R}\bnf {\mathcal R}\bisim\ {\mathcal R}$ to
remove redundant copies of ${\mathcal R}$ \cite[Theorem~8.7.2]{GivenWilsonPHD}.)

\begin{lemma}
\label{lem:SF2CPC-construct}
Given an $SF$-combinator $M$ the translation $\xtrans M _c$ has a reduction sequence to a process of the form $c\bullet\ytrans M\bnf {\mathcal R}$.
\end{lemma}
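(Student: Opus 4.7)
The proof proceeds by structural induction on the $SF$-combinator $M$.

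For the base cases $M=S$ and $M=F$, the result is immediate from the definition: $\xtrans S_c = c\bullet S \bnf {\mathcal R} = c\bullet\ytrans S \bnf {\mathcal R}$, with zero reductions required, and similarly for $F$.

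For the inductive step, suppose $M = M_1 M_2$ and that the result holds for both $M_1$ and $M_2$. By the induction hypotheses, there exist reduction sequences $\xtrans{M_1}_m \Redar m\bullet\ytrans{M_1} \bnf {\mathcal R}$ and $\xtrans{M_2}_n \Redar n\bullet\ytrans{M_2} \bnf {\mathcal R}$ (choosing the parameters to match those introduced by the outer application). Lifting both under the restrictions and in parallel with ${\sf ap}(c,m,n)$, which is permitted by closure of $\redar$ under parallel composition and restriction, yields
$$
\xtrans{M_1 M_2}_c \ \Redar\ \res m \res n \bigl({\sf ap}(c,m,n) \bnf m\bullet\ytrans{M_1}\bnf {\mathcal R} \bnf n\bullet\ytrans{M_2}\bnf {\mathcal R}\bigr).
$$
The key observation is that $\ytrans{M_i}$ is communicable, i.e.\ contains neither binding nor protected names, because the construction $\ytrans\cdot$ builds only from the variable names $S$ and $F$ and the compound constructor $\bullet$. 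Hence the pattern $m\bullet\lambda x$ of ${\sf ap}(c,m,n)$ unifies with the pattern $m\bullet\ytrans{M_1}$ to yield the substitution $\{\ytrans{M_1}/x\}$, giving one interaction. A second interaction consumes $n\bullet\ytrans{M_2}$ against $n\bullet\lambda y$, producing the body $c\bullet(\ytrans{M_1}\bullet\ytrans{M_2})$ together with residual copies of ${\mathcal R}$.

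After these two interactions the names $m$ and $n$ are no longer free, so by structural equivalence the restrictions may be discarded, leaving $c\bullet(\ytrans{M_1}\bullet\ytrans{M_2}) \bnf {\mathcal R}\bnf {\mathcal R}\bnf {\mathcal R}$. Using ${\mathcal R}\bnf {\mathcal R}\bisim {\mathcal R}$, as cited from \cite[Theorem~8.7.2]{GivenWilsonPHD}, the redundant replications collapse, yielding a process bisimilar to $c\bullet\ytrans{M_1 M_2} \bnf {\mathcal R}$, which is the desired form.

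The main technical obstacle is the bookkeeping in the inductive step: one must verify that the parallel rewriting of $\xtrans{M_1}_m$ and $\xtrans{M_2}_n$ to their canonical forms can be carried out independently of and before the two ${\sf ap}$ interactions, and that unification with $\lambda x$ and $\lambda y$ really does succeed (which requires $\ytrans{M_i}$ to be communicable). Both points are mild, but they are where the definitions of $\ytrans\cdot$ and of ${\sf ap}$ must be checked carefully; once these are settled the collapsing of ${\mathcal R}$-copies via $\bisim$ closes the argument.
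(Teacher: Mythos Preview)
Your proposal is correct and follows the same approach as the paper, which records only that the proof is by induction on the structure of $M$; you have supplied the details the paper omits, including the use of ${\mathcal R}\bnf{\mathcal R}\bisim{\mathcal R}$ that the paper explicitly flags just before the lemma. One small bookkeeping remark: as written, your induction hypothesis assumes a literal reduction to $m\bullet\ytrans{M_1}\bnf{\mathcal R}$, while your inductive conclusion is only up to $\bisim$ (because of the collapsed copies of ${\mathcal R}$); to make the induction go through verbatim you should either carry the statement up to $\bisim$ throughout, or strengthen the hypothesis to allow finitely many parallel copies of ${\mathcal R}$ and collapse them only at the end---either fix is routine and is precisely what the paper's parenthetical note is pointing at.
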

\begin{proof}
The proof is by induction on the structure of $M$.
\end{proof}

\begin{lemma}[Theorem~7.1.2 of \cite{GivenWilsonPHD}]
\label{lem:SF2CPC-red}
Given an $SF$-combinator $M$ the translation $\xtrans M _c$ preserves reduction.
\end{lemma}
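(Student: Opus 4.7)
The plan is to proceed by induction on the derivation of the $SF$-reduction $M\rew M'$, treating separately the three base reduction axioms and the three congruence rules (reduction under a left, right, or deeper sub-combinator).

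For each case the starting point is Lemma~\ref{lem:SF2CPC-construct}, which reduces the translated combinator to the canonical form $c\bullet\ytrans M\bnf\mathcal R$ (extra copies of $\mathcal R$ are collapsed using $\mathcal R\bnf\mathcal R\bisim\mathcal R$ as noted before the statement). Once in this form, the appropriate replication of $\mathcal R$ is available to fire. For a base rule such as $SABX\rew AX(BX)$, the first replication of $\mathcal R$ has pattern $\lambda c\bullet(S\bullet\lambda m\bullet\lambda n\bullet\lambda x)$; since $\ytrans{SABX}=S\bullet\ytrans A\bullet\ytrans B\bullet\ytrans X$, this unifies with $c\bullet\ytrans{SABX}$ producing the bindings $m\mapsto\ytrans A$, $n\mapsto\ytrans B$, $x\mapsto\ytrans X$, and the interaction yields $c\bullet(\ytrans A\bullet\ytrans X\bullet(\ytrans B\bullet\ytrans X))\bnf\mathcal R=c\bullet\ytrans{AX(BX)}\bnf\mathcal R$. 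Applying Lemma~\ref{lem:SF2CPC-construct} in the reverse direction then gives $\xtrans{M'}_c\Redar c\bullet\ytrans{M'}\bnf\mathcal R$, so the two processes are bisimilar. The remaining base rules (the two forms of the $F$-rule, dispatched over atomic and compound factorable forms) are handled identically by the replications of $\mathcal R$ that are tailored to them.

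For the congruence cases, the key observation is that the last four replications of $\mathcal R$ are exactly the scaffolding needed to reduce a sub-combinator in one of the four positions in which a compound sub-combinator can sit. Each such replication binds the components $u\bullet v\bullet w\bullet x$ of the reducible sub-combinator, restricts a fresh name $d$, spawns $d\bullet(u\bullet v\bullet w\bullet x)$, waits on $d\bullet\lambda z$ for the reduct, and rebuilds the full combinator around $z$. The inner sub-process $d\bullet(u\bullet v\bullet w\bullet x)\bnf\mathcal R$ is (up to renaming of $d$) precisely the shape of $\xtrans{\cdot}_d$ applied to the sub-combinator after construction, so the induction hypothesis applies and produces a reduction to $d\bullet\ytrans{\cdot'}\bnf\mathcal R$; the outer continuation then consumes this on $d$ and re-emits the completed combinator on $c$.

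The main obstacle is the congruence step and the careful management of bisimilarity around the fresh channel $d$ and the duplicated copies of $\mathcal R$. The direct encoding $\xtrans{M'}_c$ and the process produced by the above mechanism agree on their canonical form $c\bullet\ytrans{M'}\bnf\mathcal R$ but are built via different scaffolding: the mechanism in $\mathcal R$ generates additional restricted names and intermediate cases that do not appear in a fresh $\xtrans{M'}_c$. The argument must therefore rely on $\bisim$ rather than on equality, using name-restriction scope extrusion to garbage-collect the spent scaffolding and the absorption law $\mathcal R\bnf\mathcal R\bisim\mathcal R$ to eliminate surplus copies of the reducing process. Once this is in place, an outer structural induction on $M\rew M'$ threads the congruence replications through the derivation and establishes the preservation property globally.
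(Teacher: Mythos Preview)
Your treatment of the base axioms is correct and matches the paper's (terse) proof: use Lemma~\ref{lem:SF2CPC-construct} to reach $c\bullet\ytrans{M}\bnf\mathcal R$, then fire the matching replication in $\mathcal R$.

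The congruence argument, however, has a real gap. You commit to ``construct first, then let the last four replications of $\mathcal R$ walk down to the redex''. Those four replications only match patterns with at least four top-level components (the fifth-component rule strips one and recurses, the other three require exactly four with one designated $4$-ary sub-pattern). But a congruence step need not live inside such a shape. Take $M=S(SXYZ)\rew S(XZ(YZ))$: after construction you have $c\bullet(S\bullet(S\bullet\ytrans X\bullet\ytrans Y\bullet\ytrans Z))\bnf\mathcal R$, a two-component pattern whose second component is the redex. None of the eleven patterns in $\mathcal R$ unifies with this, so the process is stuck and your step~``the last four replications of $\mathcal R$ are exactly the scaffolding needed'' fails. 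The same obstruction arises for any partial application $O\,N$ or $O\,N_1\,N_2$ with a redex buried inside an argument.

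The repair is not to construct first in the congruence cases at all. Use the compositional form of the translation directly: if $M=M_1M_2$ and $M_2\rew M_2'$, then
\[
\xtrans{M_1M_2}_c=\res m\res n\bigl({\sf ap}(c,m,n)\bnf\xtrans{M_1}_m\bnf\xtrans{M_2}_n\bigr),
\]
and the induction hypothesis applied to $\xtrans{M_2}_n$ yields $\xtrans{M_2}_n\Redar\bisim\xtrans{M_2'}_n$; congruence of $\bisim$ under restriction and parallel composition then gives $\xtrans{M_1M_2}_c\Redar\bisim\xtrans{M_1M_2'}_c$. The left-congruence case is symmetric. Lemma~\ref{lem:SF2CPC-construct} is then only needed at the leaf of the induction, exactly where the redex sits, which is how the paper's ``consider each reduction rule and Lemma~\ref{lem:SF2CPC-construct}'' should be read. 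The last four replications of $\mathcal R$ are there for another purpose (keeping reduction live after a top-level contraction so that reflection and bisimilarity go through), not for simulating arbitrary congruence steps from the fully constructed form.
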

\begin{proof}
The proof is routine 
by considering each reduction rule and Lemma~\ref{lem:SF2CPC-construct}.
\end{proof}

\begin{lemma}
\label{lem:sf2cpc-red}
The translation $\encode \cdot _c$ preserves and reflects reduction. That is:
\begin{enumerate}
\item If $M\redar N$ then $\encode M _c \Redar\bisim \encode N _c$;
\item if $\encode M _c \redar Q$ then there exists $Q'$ and $N$ such that
      $Q\Redar Q'$ and $Q'\bisim \encode N _c$ and either $M\redar N$ or $M=N$.
\end{enumerate}
\end{lemma}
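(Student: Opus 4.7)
The proof splits into preservation (part 1) and reflection (part 2), with reflection being the more delicate direction. For part 1 the plan is simply to unpack Lemma~\ref{lem:SF2CPC-red}: given $M \redar N$ it already supplies a reduction sequence $\encode M _c \Redar Q$ with $Q \bisim \encode N _c$, which is exactly the shape required. The only subtlety is that Lemma~\ref{lem:SF2CPC-construct} must first be invoked to drive $\encode M _c$ to the canonical form $c\bullet\ytrans M\bnf{\mathcal R}$, so that the appropriate replicated case in ${\mathcal R}$ can unify with the redex and fire.

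For part 2 the approach is a case analysis on the single CPC step $\encode M _c \redar Q$. By compositionality of the translation and Lemma~\ref{lem:SF2CPC-construct}, every such step falls into one of three classes: (i) a reduction taking place inside some sub-encoding $\encode{M'}_{n}$ for a proper subterm $M'$ of $M$; (ii) an interaction between an ${\sf ap}(c',m,n)$ process and a canonical subterm emission $c'\bullet\ytrans{M'}$, which threads components upward without changing the underlying combinator; or (iii) a successful unification by one of the replicated cases in ${\mathcal R}$ against an $SF$-redex shape. Class (iii) is the one that matches a genuine source reduction: if it is one of the first seven cases in Figure~\ref{fig:SF-red-paper} the replicated case directly mirrors an $SF$ reduction rule and the corresponding $N$ with $M \redar N$ can be read off; if it is one of the last four sub-combinator cases then $M$ has a reducible proper subterm, and by driving the delegation on the fresh channel $d$ to completion with further administrative steps one reaches a process bisimilar to $\encode N _c$. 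Classes (i) and (ii) are purely administrative: no source reduction has occurred, and the claim is that $Q$ admits further reductions to a process bisimilar to $\encode M _c$, so one takes $N = M$.

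The main obstacle will be handling the administrative cascade cleanly, because a single CPC step can leave the process in an intermediate state that is not literally the encoding of any $SF$-combinator but only becomes so after the remaining administrative steps are performed. The plan to manage this is to isolate a class of \emph{stable} process shapes — essentially parallel products of the form $c'\bullet\ytrans{M'}\bnf{\mathcal R}$ for the relevant subterm residues — show that every intermediate process reached from $\encode M _c$ reduces in a bounded number of further steps to such a stable shape, and then use the absorption property ${\mathcal R}\bnf{\mathcal R}\bisim{\mathcal R}$ together with standard congruence and up-to-bisimulation reasoning to justify the $\bisim$ appearing in the conclusion. A pleasant by-product is that the number of administrative steps available after any state is bounded by the syntactic size of the $SF$-combinator in play, which is precisely what will be needed elsewhere to establish divergence reflection for the surrounding valid-encoding argument.
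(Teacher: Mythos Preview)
Your proposal is correct and follows the paper's own proof: Part~1 via Lemmas~\ref{lem:SF2CPC-construct} and~\ref{lem:SF2CPC-red}, and Part~2 by observing that any single step from $\encode M_c$ arises from an encoded application and is either administrative structure-rebuilding (take $N=M$) or corresponds to a genuine $SF$-reduction. Your three-class case analysis, the discussion of stable shapes, the absorption ${\mathcal R}\bnf{\mathcal R}\bisim{\mathcal R}$, and the bounded-administrative-steps observation are a faithful and more explicit elaboration of what the paper sketches in two sentences.
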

\begin{proof}
The first part can be proved by exploiting Lemmas~\ref{lem:SF2CPC-construct} and \ref{lem:SF2CPC-red}.
The second is by considering the reduction $\encode M _c\redar Q$ which must arise
from the encoding of an application.
It is then straightforward to show that either:
the reductions $Q\Redar Q'$ correspond only to rebuilding the structure as in
Lemma~\ref{lem:SF2CPC-construct}; or
the reductions correspond to a reduction $M\redar N$ and $Q'\bisim\encode N _c$.
\end{proof}

\begin{theorem}
\label{thm:SF2CPC}
The translation $\xtrans\cdot _c$ is a parallel encoding from $SF$-calculus to CPC.
\end{theorem}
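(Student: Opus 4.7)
The plan is to mirror the structure of the proof of Theorem~\ref{thm:lambda2pi}, treating each clause of Definition~\ref{def:pe} in turn and leaning on the reduction correspondence already packaged in Lemma~\ref{lem:sf2cpc-red}. Compositionality, parallelisation, and name invariance fall out directly from the three clauses defining $\xtrans\cdot_c$: the translation of $S$ and $F$ places their constructor pattern on the channel $c$ in parallel with $\mathcal{R}$, the translation of $MN$ is literally of the shape $\res m \res n ({\sf ap}(c,m,n) \bnf \xtrans M_m \bnf \xtrans N_n)$ required for parallelisation, and no source name other than $S,F$ (treated as reserved) and the fresh parameter $c$ plays a role in the construction, so a source renaming $\sigma$ commutes with $\xtrans\cdot_c$ up to $=$ when injective (and up to $\bisim$ in general).

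For operational correspondence, I would quote Lemma~\ref{lem:sf2cpc-red} directly. Part~(1) of that lemma gives the forward direction $M \Redar N \Rightarrow \xtrans M_c \Redar \bisim \xtrans N_c$ by induction on the length of the source reduction. For the backward direction, part~(2) already handles a single step $\xtrans M_c \redar Q$; iterating and tracking that each target step either rebuilds applicative structure (in the sense of Lemma~\ref{lem:SF2CPC-construct}) or simulates a genuine source reduction yields, for any $\xtrans M_c \Redar T$, an $SF$-combinator $N$ with $M \Redar N$ and $T \Redar \bisim \xtrans N_c$.

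The main obstacle is divergence reflection, because $\mathcal{R}$ sits under a replication and the ${\sf ap}(c,m,n)$ hubs introduce bookkeeping reductions that are not reflected in the source calculus. The plan here is to argue, as for $\pi$-calculus, that these administrative reductions are bounded. Concretely, define a measure $\mu(\xtrans M_c)$ that counts the number of pending ${\sf ap}(\cdot,\cdot,\cdot)$ nodes plus the partial-progress of each pattern of $\mathcal{R}$ that has consumed some prefix of its input; every administrative step strictly decreases $\mu$ and no administrative step can be enabled out of nowhere because $\mathcal{R}$'s replicated cases only fire against a translated combinator whose structure has already been assembled (again by Lemma~\ref{lem:SF2CPC-construct}). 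Hence any infinite reduction sequence of $\xtrans M_c$ must contain infinitely many non-administrative steps, each of which, via part~(2) of Lemma~\ref{lem:sf2cpc-red}, corresponds to a source reduction $M_i \redar M_{i+1}$, yielding $M \redar^\omega$ as required. Putting the five clauses together completes the proof.
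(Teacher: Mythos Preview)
Your proposal is correct and follows essentially the same approach as the paper's own proof: compositionality, parallelisation, and name invariance by construction; operational correspondence via Lemma~\ref{lem:sf2cpc-red}; and divergence reflection by bounding the administrative (non-source) reductions. The paper is terser on the last point, simply observing that such reductions arise only from translated applications and are bounded by the size of the source term, whereas you spell out a measure---your phrasing about ``partial-progress of each pattern of $\mathcal{R}$ that has consumed some prefix'' is slightly off since CPC cases unify atomically rather than incrementally, but the intent and the resulting bound are the same.
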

\begin{proof}
Compositionality, parallelisation, and name invariance hold by construction.
Operational correspondence follows from Lemma~\ref{lem:SF2CPC-red}.
Divergence reflection can be proved by observing that the only reductions introduced
in the translation that do not correspond to reductions in the source language are
from translated applications, and these are bounded by the size of the source term.
\end{proof}

The lack of an encoding of CPC (or even $\pi$-calculus) into $SF$-calculus can be
proved in the same manner as Theorem~\ref{thm:no-pi2lambda} for showing no encoding
of $\pi$-calculus into $\lambda$-calculus.

\begin{theorem}
\label{thm:noCPC2SF}
There is no reduction preserving encoding from CPC into $SF$-calculus.
\end{theorem}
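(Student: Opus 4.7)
The plan is to replay the argument used in Theorem~\ref{thm:no-pi2lambda}, with $SF$-calculus playing the role of $\lambda$-calculus and CPC that of $\pi$-calculus. The distinguishing computation is again the parallel-or function $g$ satisfying $g(\bot,\bot)\rew^*\bot$, $g(\mathtt{T},\bot)\rew^*\mathtt{T}$, and $g(\bot,\mathtt{T})\rew^*\mathtt{T}$. Any reduction preserving encoding of CPC into $SF$-calculus would turn a CPC realisation of parallel-or into an $SF$-combinator with the same behaviour, so it suffices to show that CPC can realise parallel-or while $SF$-calculus cannot.

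For the positive half, I would observe that the $\pi$-calculus process $G = \iap{n_1}{x}.\oap{m}{x}.\zero \bnf \iap{n_2}{x}.\oap{m}{x}.\zero$ used in Theorem~\ref{thm:no-pi2lambda} transports directly into CPC by replacing each input/output prefix by the corresponding CPC case on a binding or variable name; alternatively I would appeal to the homomorphism from $\pi$-calculus into CPC established later in this section to carry $G$ across. Either way, CPC in parallel with processes $P_1$ and $P_2$ emitting on $n_1$ and $n_2$ outputs $\mathtt{T}$ on $m$ whenever either $P_i$ does, which is exactly the parallel-or behaviour.

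For the negative half, I would transfer Barendregt's Theorem~14.4.12 from $\lambda$-calculus to $SF$-calculus. That proof relies only on confluence of reduction together with a standardisation/leftmost-reduction property: a purported combinator $G$ computing $g$ must, on inputs $(\bot,\mathtt{T})$ and $(\mathtt{T},\bot)$, commit to reducing one argument position before the other, and confluence then forces a diverging standard reduction on one of the two mixed instances, contradicting $g\rew^*\mathtt{T}$. Since $SF$-calculus is confluent \cite{JayGW11} and admits an analogous standardisation result, the same contradiction rules out an $SF$-combinator representing parallel-or.

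The main obstacle is making this last step precise in $SF$-calculus rather than just waving at Barendregt. Unlike $\lambda$-calculus, $SF$-calculus has intensional access to factorable structure via $F$, and one must check that factorisation still cannot observe the convergence status of an argument without reducing it: $F$ only unpacks the head of a factorable form, so a sub-combinator that is not already in factorable form cannot be inspected without first being reduced. Once this is in place the sequential-commitment argument survives and the conclusion follows as in Theorem~\ref{thm:no-pi2lambda}: no reduction preserving encoding from CPC into $SF$-calculus can exist.
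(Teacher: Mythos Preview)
Your overall strategy matches the paper's: both reduce Theorem~\ref{thm:noCPC2SF} to the parallel-or argument of Theorem~\ref{thm:no-pi2lambda}, with CPC supplying a realisation of $g$ (e.g.\ via the homomorphic image of the $\pi$-calculus process $G$) and $SF$-calculus shown unable to represent $g$.

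The one substantive divergence is in the negative half. You propose to re-run Barendregt's Theorem~14.4.12 inside $SF$-calculus, relying on confluence together with an ``analogous standardisation result''. The paper instead observes that any attempt to exploit $F$ for parallel-or would amount to using factorisation to detect termination of a sub-combinator, and appeals directly to Theorem~5.1 of \cite{JayGW11}, where that is shown to be paradoxical. The paper's route is shorter and rests on an already-published result; your route is self-contained in spirit but leaves real work undone: a standardisation theorem for $SF$-calculus is not established in the cited literature, and the presence of $F$ (which inspects head structure rather than merely applying) means the usual leftmost-reduction argument does not transfer verbatim. Your closing paragraph correctly identifies this as the crux, but ``once this is in place'' is precisely the gap. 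If you want to stay on your path, you would need either to prove standardisation for $SF$-calculus or, more economically, to argue directly that an $SF$-combinator computing $g$ would decide termination and then invoke Theorem~5.1 of \cite{JayGW11}, which is what the paper does.
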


It may appear that the factorisation operator $F$ adds some expressiveness that could
be used to capture the parallel-or function $g$.
Perhaps use $F$ to switch on the result of the first function so that (assuming true
is some operator {\tt T} then) $g(x,y)$ is represented by $Fx{\tt T}(K(Ky))$ that reduces to
${\tt T}$ when $x={\tt T}$ and to $K(Ky)MN\Redar y$ when $x=MN$ that somehow is factorable but
not terminating.
However, this kind of attempt is equivalent to exploiting factorisation to detect termination 
and turns out to be paradoxical as demonstrated in the proof of Theorem~5.1 of \cite{JayGW11}.

This completes the arrow down the right side of the computation square. The rest of this section discusses some properties of translations and the diagonal from the top left to the bottom right corner of the square.

\medskip

Observe that the parallel encoding from $SF$-calculus into CPC does not require the choice of a reduction strategy, unlike Milner's encodings from $\lambda$-calculus into $\pi$-calculus.
The structure of patterns and peculiarities of pattern-unification allow the reduction relation to be directly rendered by CPC.
In a sense this is similar to the approach in \cite{givenwilson:hal-00987594} of encoding the $SF$-combinators as the tape
of a Turing Machine, the pattern $\ytrans\cdot$,
and providing another process that reads the tape and performs operations upon it, the $SF$-reducing process
${\mathcal R}$.
This approach can also be adapted in a straightforward manner to support a parallel encoding of
$SK$-calculus into CPC, that like the encoding of $SF$-calculus does not fix a reduction strategy.

\begin{theorem}
\label{thm:SK2CPC}
There is a translation $\xtrans\cdot _c$ that is a parallel encoding from $SK$-calculus into CPC.
\end{theorem}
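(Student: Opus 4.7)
The plan is to mirror the construction of Theorem~\ref{thm:SF2CPC}, replacing the factorisation machinery with replications that implement the $S$ and $K$ rewrite rules. I would reserve two names $S$ and $K$ and define a construction $\ytrans\cdot$ on $SK$-combinators by $\ytrans S \define S$, $\ytrans K \define K$, and $\ytrans{MN} \define \ytrans M \bullet \ytrans N$, exactly as for the $SF$-encoding. The translation itself would then be given by $\xtrans S _c \define c\bullet S \bnf {\mathcal R}'$, $\xtrans K _c \define c\bullet K \bnf {\mathcal R}'$, and $\xtrans{MN} _c \define \res m\res n ({\sf ap}(c,m,n) \bnf \xtrans M _m \bnf \xtrans N _n)$, where ${\mathcal R}'$ is an $SK$-reducing process and ${\sf ap}(c,m,n)$ is the application-glue used in the $SF$ case, carrying ${\mathcal R}'$ instead of ${\mathcal R}$.

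For ${\mathcal R}'$, the head-reduction replications would be
\begin{eqnarray*}
& !\lambda c\bullet (S\bullet \lambda m\bullet \lambda n\bullet \lambda x) \to c\bullet(m\bullet x\bullet (n\bullet x)) & \\
& !\lambda c\bullet (K\bullet \lambda m\bullet \lambda n) \to c\bullet m &
\end{eqnarray*}
together with sub-combinator reduction replications modelled on the last four replications in Figure~\ref{fig:SF-red-paper}: each such replication allocates a fresh channel $d$ on which a reducible sub-combinator is awaited and then substituted back into the surrounding position. Since this forwarding mechanism is independent of the particular operator set, it transfers directly, requiring only that the guard patterns recognise the shapes of $SK$-reducible combinators, namely $S$ with three arguments (four-component pattern) or $K$ with two arguments (three-component pattern).

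The proof then follows the same path as Theorem~\ref{thm:SF2CPC}. First I would establish the analogue of Lemma~\ref{lem:SF2CPC-construct} by structural induction on $M$: every $\xtrans M _c$ has a reduction sequence to $c\bullet \ytrans M \bnf {\mathcal R}'$, with the inductive step rebuilding a compound via ${\sf ap}$. Then the analogue of Lemma~\ref{lem:sf2cpc-red}, giving preservation and reflection of reduction, follows by case analysis on the two $SK$-rules: the head-reduction replications of ${\mathcal R}'$ fire on translated $S$- or $K$-redexes, while any deeper reduction is handled by the corresponding forwarding replication. With these in hand, compositionality, parallelisation, and name invariance hold by construction; operational correspondence follows from the reduction lemmas; and divergence reflection follows because the only introduced reductions are the bounded rebuilding steps of translated applications.

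The main obstacle is bookkeeping: one must calibrate the arities of the sub-combinator replications so that they fire exactly on reducible $SK$-sub-combinators and not unsoundly on partial applications such as a bare $Sm$ or $Km$, and so that both arities $3$ (for $K$-redexes) and $4$ (for $S$-redexes) are covered. As in the $SF$ case this is verified by inspection of the guard patterns, exploiting that pattern-unification only succeeds when the structural shape matches. Once this is settled the argument is a direct transcription of the proof of Theorem~\ref{thm:SF2CPC}.
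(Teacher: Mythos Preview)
Your proposal is correct and matches the paper's approach exactly: the paper gives no explicit proof of Theorem~\ref{thm:SK2CPC}, stating only that the $SF$-to-CPC construction ``can also be adapted in a straightforward manner to support a parallel encoding of $SK$-calculus into CPC, that like the encoding of $SF$-calculus does not fix a reduction strategy.'' You have spelled out precisely this adaptation, including the one point of genuine care---that $K$-redexes have arity~$2$ rather than~$3$, so the sub-combinator forwarding replications must cover both three- and four-component reducible shapes---which the paper leaves implicit.
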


The translation from $SF$-calculus to CPC presented here is designed to map application to 
parallel composition (with some restriction and process $R$) so as to meet the
compositionality
and parallelisation
criteria for a parallel encoding.
However, the construction $\ytrans\cdot$ can be used to provide a cleaner translation if 
these are not required
(while still supporting the other criteria).
Consider an alternative translation $\xtrans \cdot ^c$ parametrised by a name $c$ as usual and defined by
$\xtrans M ^c \define c\bullet\ytrans M \bnf {\mathcal R}$.

\subsubsection*{$\pi$-calculus.}
\label{subsec:pi}

Across the bottom of the computation square there is a homomorphism from $\pi$-calculus into CPC.
The converse separation result can be proved multiple ways \cite{GivenWilsonGorlaJay10,GivenWilsonPHD,givenwilson:hal-00987578}.

The translation $\encode \cdot$ from $\pi$-calculus into CPC is homomorphic on all process forms except for the input and output which are translated as follows:
\begin{equation*}
\xtrans {\iap a b .P} \define a \bullet \lambda b\bullet \n \to \xtrans P\qquad\qquad
\xtrans {\oap a b .P} \define a\bullet b\bullet \lambda \n\to \xtrans P
\end{equation*}
Here $\n$ is a fresh name (due to the renaming policy 
to avoid all other
names in the translation) that prevents the introduction of new reductions due to CPC's unification.

\begin{lemma}[Corollary~7.2.3 of \cite{GivenWilsonPHD}]
\label{lem:pi2cpc-valid}
The translation $\encode\cdot$ from $\pi$-calculus into CPC is a valid encoding.
\end{lemma}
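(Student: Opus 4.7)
The plan is to verify each of the five properties of a valid encoding in turn, leveraging the structure of the translation. Compositionality is immediate from the construction: all operators except input and output are translated homomorphically, and for the prefixes the translation itself provides the required fixed context (with the free name $a$ playing the role of the channel). Name invariance follows from the renaming policy, which maps every $\pi$-calculus name to itself while reserving $\n$ as a globally fresh name; since $\sigma$ and $\sigma'$ agree on all source names and $\n$ is never touched, substitution commutes with the translation.

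The central step is operational correspondence, and the key structural observation is that in the image of $\encode\cdot$ the only possible CPC interactions are those simulating a $\pi$-calculus communication. Indeed, two translated inputs carry patterns of the form $a\bullet\lambda b\bullet \n$ and $a'\bullet\lambda b'\bullet\n$; their middle components are both binding names and therefore cannot unify. Two translated outputs carry patterns $a\bullet b\bullet \lambda\n$ and $a'\bullet b'\bullet\lambda\n$; their third components are both binding names and likewise cannot unify. A translated input $a\bullet\lambda b\bullet\n$ and a translated output $a'\bullet c\bullet\lambda\n$ unify precisely when $a=a'$, and the resulting substitution pair is $(\{c/b\},\{\n/\n\})$. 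Applying this to the bodies yields $\{c/b\}\encode P \bnf \encode Q$, which is exactly $\encode{\{c/b\}P \bnf Q}$, mirroring the $\pi$-calculus axiom. Both directions of operational correspondence then follow by a routine induction on reductions, with equality up to $\equiv$ (and hence $\bisim$) absorbing any $\alpha$-renaming of bound names.

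Divergence reflection follows because the translation introduces no administrative reductions: every CPC reduction starting from an image $\encode P$ arises from exactly one input/output pair, and hence corresponds to a unique $\pi$-calculus reduction step. Any infinite CPC reduction sequence from $\encode P$ therefore projects to an infinite reduction sequence from $P$. Success sensitiveness is immediate since $\ok$ is translated homomorphically and the structural equivalence of CPC coincides with that of $\pi$-calculus on the $\ok$ primitive.

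The main obstacle is the careful verification that no unintended CPC unifications can arise, even after repeated substitution. This reduces to ensuring that the renaming policy keeps $\n$ distinct from every source name in every context, so that the ``guard'' role played by the $\n$/$\lambda\n$ pair in the third component of input and output patterns is preserved under reduction. This is precisely what is established in Corollary~7.2.3 of \cite{GivenWilsonPHD}, which we can invoke directly to close the argument.
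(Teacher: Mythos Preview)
The paper does not supply its own proof of this lemma; it simply records the result as Corollary~7.2.3 of \cite{GivenWilsonPHD} and moves on. Your proposal therefore goes well beyond what the paper does, giving the natural direct verification of the five clauses of Definition~\ref{def:ve}. The argument you sketch is correct: the crucial point is exactly the case analysis on unification of translated prefixes, and your observation that two binding names never unify (since neither is communicable) is what rules out input--input and output--output interactions, while the $\n$/$\lambda\n$ guard in the third component forces input--output pairs to behave exactly as $\pi$-communications. The one-to-one correspondence of reductions then gives both halves of operational correspondence and divergence reflection simultaneously, and the homomorphic treatment of $\ok$ handles success sensitiveness. This is presumably the shape of the argument in the cited thesis, so your approach and the (external) reference are aligned; relative to the present paper you are simply filling in what it chose to delegate.
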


\begin{theorem}
\label{thm:pi2cpc-hom}
There is a homomorphism (Definition~\ref{def:homo-con}) from $\pi$-calculus into CPC.
\end{theorem}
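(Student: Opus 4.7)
The plan is to unpack Definition~\ref{def:homo-con}, which demands two things of the translation: (i) that $\encode\cdot$ be a valid encoding in the sense of Definition~\ref{def:ve}, and (ii) that it preserve parallel composition, i.e.\ $\encode{P_1\bnf P_2}=\encode{P_1}\bnf\encode{P_2}$. Both ingredients are essentially on hand at this point in the paper, so the proof will be short.

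For (i) I would cite Lemma~\ref{lem:pi2cpc-valid} directly; it delivers all five clauses of valid encoding for this exact translation. The non-trivial content hiding inside that cited result is that operational correspondence and divergence reflection survive CPC's symmetric unification: the fresh protected name $\n$ inserted on opposite sides of the input and output translations is what forbids two translated inputs (or two translated outputs) from unifying with each other, so that the successful unifications in CPC correspond precisely to the $\pi$-calculus communications $\iap a b.P \bnf \oap a c.Q \redar \{c/b\}P \bnf Q$. I would not reprove this, but I would flag it as the substantive step that the cited lemma handles.

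For (ii), the equation $\encode{P_1\bnf P_2}=\encode{P_1}\bnf\encode{P_2}$ holds strictly by construction: the translation is stipulated to be homomorphic on every process constructor except input and output, so preservation of parallel composition is a definitional clause rather than a theorem to be proved, and similarly for $\zero$, $!\_$, $\res a\_$, and $\ok$. Combining (i) and (ii) yields the homomorphism, with no significant obstacle remaining; the design of $\encode\cdot$ was made precisely so that this corollary becomes immediate once Lemma~\ref{lem:pi2cpc-valid} is in hand. In effect, the only judgement call is simply to verify that the translation as given really does leave the parallel operator untouched, which a one-line inspection confirms.
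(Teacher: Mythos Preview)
Your proposal is correct and mirrors the paper's (implicit) argument: the theorem is immediate once Lemma~\ref{lem:pi2cpc-valid} gives validity and one observes that the translation is defined homomorphically on parallel composition. One small inaccuracy in your aside: the fresh name $\n$ is not a \emph{protected} name in the translation---it appears as a variable name in the input clause and as a binding name $\lambda\n$ in the output clause, and it is this asymmetry (variable vs.\ binding) that blocks two translated inputs, or two translated outputs, from unifying; but since you defer that mechanism to the cited lemma anyway, this does not affect your proof.
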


Thus the translation provided above is a homomorphism from $\pi$-calculus into CPC.
Now consider the converse separation result.

\begin{lemma}[Theorem~7.2.5 of \cite{GivenWilsonPHD}]
\label{lem:nocpc2pi-1}
There is no valid encoding of CPC into $\pi$-calculus.
\end{lemma}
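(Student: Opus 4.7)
The plan is a proof by contradiction exploiting the fundamental mismatch between CPC's symmetric pattern-unification and $\pi$-calculus's asymmetric input/output synchronisation. The decisive observation is that a single CPC interaction can bind names and test equalities on both sides of the reducing pair simultaneously, whereas a single $\pi$-calculus reduction moves one name along one directed channel from one output prefix to one input prefix. I would assume towards contradiction that $(\encode\cdot,\renpol)$ is a valid encoding of CPC into $\pi$-calculus, and construct a CPC witness whose required symmetry cannot survive the translation.

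Concretely, I would exhibit a pair of CPC processes $P_a$ and $P_b$ parametrised by distinct names $a,b$ such that: (i) $P_a \bnf P_b \suc_1$ but neither $P_a\suc_1$ nor $P_b\suc_1$ in isolation; (ii) the transposition $\sigma=(a\ b)$ satisfies $\sigma P_a = P_b$ and $\sigma P_b = P_a$; and (iii) success genuinely requires a symmetric unification, for example a CPC interaction that tests a protected name $\pro a$ in one case against the variable $a$ exported from the partner in the other. Protected names are essential here, as they prevent the encoding from using channel polarity to covertly assign asymmetric roles; reserving a position as $\pro a$ means it can only be matched by $a$ itself, so any faithful encoding must preserve the information content of this check and cannot pre-decide the direction of information flow.

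Applying the valid encoding clauses in sequence, compositionality places $\encode{P_a \bnf P_b}$ as a fixed context around the parallel composition of $\encode{P_a}$ and $\encode{P_b}$; success sensitiveness together with operational correspondence forces a reduction chain reaching a state equivalent to one containing $\ok$, and by the isolation property of the witness that chain must at some point synchronise a prefix in $\encode{P_a}$ with a prefix in $\encode{P_b}$. In $\pi$-calculus such a step pairs an input on some channel $c$ in one component with an output on $c$ in the other. Name invariance, applied with the injective substitution $\sigma$, then yields $\encode{P_b} = \sigma'\encode{P_a}$ as an equality; since $\sigma'$ is a bijective renaming it preserves prefix polarities and merely permutes channels according to $\renpol$. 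When the witness is tuned so that the synchronising channel $c$ must lie in the image of $\{a,b\}$ under $\renpol$, the critical prefix of $\encode{P_a}$ corresponds in $\encode{P_b}$ to a prefix of \emph{the same} polarity on the complementary renamed channel, forcing two inputs or two outputs to face each other and contradicting the single $\pi$-calculus communication rule.

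The hard part will be designing $P_a$ and $P_b$ so that the symmetry cannot be broken by the encoding inserting auxiliary coordinating machinery: compositionality must be leveraged to rule out "polariser" subprocesses that deterministically assign distinct roles to the two encoded participants, and the witness must force the critical communication onto a channel demonstrably tied to a source name affected by $\sigma$, rather than to some fresh bound name that $\sigma'$ fixes and that could serve as a one-sided oracle. This is the technical heart of the separation and is why the proof in \cite{GivenWilsonPHD} (cited as Theorem~7.2.5) requires a careful accounting of how protected names constrain the channels available to the encoding, rather than a one-line symmetry invocation.
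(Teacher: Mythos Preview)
Your approach is genuinely different from the paper's, and considerably more elaborate than necessary. The paper's proof sketch uses a \emph{single} CPC process $P = n \to \ok$ and exploits self-reduction: $P \not\suc$ (a lone case cannot reduce) while $P \bnf P \suc$ (the pattern $n$ unifies with itself, yielding $\ok \bnf \ok$). The contradiction then comes from the $\pi$-calculus fact that for every process $T$, if $T \bnf T \suc$ then already $T \suc$; combined with success sensitiveness and compositionality of a putative valid encoding, this is enough. No name invariance, no transposition, no protected names, no polarity bookkeeping.

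Your plan instead builds two distinct processes $P_a$, $P_b$ related by a name swap and tries to force a polarity clash via name invariance. This can be made to work (the paper's related-work section mentions that symmetry-based proofs also exist in the cited thesis), but your write-up leaves the decisive step open: you explicitly flag as ``the hard part'' the need to force the critical synchronisation onto a channel in the image of $\{a,b\}$ rather than a fresh bound name, and you do not actually discharge it. Your claim that protected names are ``essential'' is also incorrect on the evidence of the paper's argument, which uses only a bare variable-name pattern. So while your route is not wrong in spirit, it is over-engineered relative to the one-line self-reducing witness the paper uses, and as stated it is a plan with an acknowledged gap rather than a proof.
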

\begin{proof}[Sketch]
Define the {\em self-reducing} CPC process $P=n\to \ok$.
Observe that $P\not\suc$ and $P\bnf P\suc$.
However, for every $\pi$-calculus process $T$ such that $T\bnf T\suc$ it holds that $T\suc$.
This is sufficient to show contradiction of any possible valid encoding.
\end{proof}

\begin{theorem}
\label{thm:noCPC2pi}
There is no homomorphism (Definition~\ref{def:homo-con}) from CPC into $\pi$-calculus.
\end{theorem}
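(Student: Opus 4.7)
The plan is to derive Theorem~\ref{thm:noCPC2pi} as a direct corollary of Lemma~\ref{lem:nocpc2pi-1}. By Definition~\ref{def:homo-con}, a (concurrent) homomorphism is a valid encoding that additionally preserves parallel composition. Thus every homomorphism is in particular a valid encoding, so the class of homomorphisms is a (proper) subclass of the class of valid encodings. Consequently, any impossibility result for valid encodings automatically yields the analogous impossibility result for homomorphisms.

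Concretely, I would argue by contraposition: suppose, toward a contradiction, that there exists a homomorphism $(\encode\cdot,\renpol)$ from CPC into $\pi$-calculus. Then $(\encode\cdot,\renpol)$ satisfies all five properties in Definition~\ref{def:ve}, and hence is a valid encoding of CPC into $\pi$-calculus. This directly contradicts Lemma~\ref{lem:nocpc2pi-1}, which rules out the existence of any such valid encoding. Therefore no homomorphism from CPC into $\pi$-calculus can exist.

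No additional obstacle arises here: all of the work has already been pushed into Lemma~\ref{lem:nocpc2pi-1}, whose proof relies on the self-reducing process $P = n\to\ok$ and the observation that $P\not\suc$ while $P\bnf P\suc$, combined with the $\pi$-calculus fact that $T\bnf T\suc$ implies $T\suc$. Since this separation uses only success sensitiveness (and compositionality together with the homomorphic treatment of parallel composition would, if anything, make matters worse for a prospective encoder), the stronger restriction imposed by Definition~\ref{def:homo-con} introduces no new subtleties in deriving the theorem.
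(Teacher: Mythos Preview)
Your proposal is correct and matches the paper's approach: the theorem is stated immediately after Lemma~\ref{lem:nocpc2pi-1} with no separate proof, precisely because a concurrent homomorphism is by definition a valid encoding, so the lemma's impossibility result transfers directly. Your contrapositive argument makes this explicit and is exactly the intended reasoning.
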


\section{Conclusions and Future Work}
\label{sec:conclusions}

This work illustrates that there are increases in expressive power by shifting along two
dimensions from: extensional to intensional, and sequential to concurrent. This is
seen in
the computation square relating $\lambda_v$-calculus, $SF$-calculus,
$\pi$-calculus, and CPC 
\vspace*{-0.4cm}
\begin{center}
\begin{picture}(250,65)(0,0)
\put(5,43){\mbox{$\lambda_v$-calculus}}
\put(125,43){\mbox{\ \ \ \ \ \ \ \ \ \ \ \ \ $SF$-calculus}}
\put(5,5){\mbox{~$\pi$-calculus}}
\put(123,5){\mbox{concurrent pattern calculus}}
\put(68,45){\vector(1,0){80}}
\put(68,8){\vector(1,0){45}}
\put(34,35){\line(0,-1){5}}
\put(34,26){\vector(0,-1){8}}
\put(183,35){\line(0,-1){5}}
\put(183,26){\vector(0,-1){8}}
\end{picture}
\end{center}
\vspace*{-0.2cm}
where the left side is extensional,
the right side intensional,
the top side sequential,
and the bottom side concurrent.
The horizontal arrows are homomorphisms that map application/parallel composition to itself. 
The vertical arrows are parallel encodings that map application to parallel composition
(with some extra machinery). 
Further, there are no reverse arrows as each arrow signifies an increase in expressive power.

Such a square identifies relations that are more general than simply the choice of calculi here.
The top left corner could be populated by $\lambda_v$-calculus or $\lambda_l$-calculus with minimal changes to the proofs. Alternatively, choosing $\lambda$-calculus or $SK$-calculus may also hold, although a parallel encoding into $\pi$-calculus requires some work.
The top right corner could be populated by any of the structure complete combinatory logics 
\cite{JayGW11,GivenWilsonPHD}. It may also be possible to place a pattern calculus \cite{JK09,pcb}, at the top right.
The bottom left corner is also open to many other calculi: monadic/polyadic synchronous/asynchronous $\pi$-calculus could replace $\pi$-calculus with no significant changes to the results \cite{GivenWilsonPHD,givenwilson:hal-00987578}. 
Similarly there are, and will be, other process calculi that can take the place of CPC at the bottom right.
For Spi calculus \cite{gordon1997ccp} an encoding of $SF$-calculus is delicate due to correctly handling reduction and not introducing infinite reductions or blocking on Spi calculus primitives and reductions. For Psi calculi \cite{BJPV11} the encoding can be achieved very similarly to CPC, although the implicit computation component of Psi calculi could simply allow for $SF$-calculus with the rest being moot.
Although multiple process calculi may populate the bottom right hand corner,
the elegance of CPC's intensionality is illustrated by the construction $\ytrans\cdot$ for combinatory logics and \cite{givenwilson:hal-00987594}.

\withrw{
\vspace*{-0.3cm}
\subsubsection*{Related Work}

The choice of relations here is influenced by existing approaches.
Homomorphisms in the sequential setting are typical \cite{Curry58combinatorylogic,Barendregt85,Felleisen199135}.
Valid encodings are popular \cite{G:IC08,G:DC10,G:CONCUR08,LPSS10,LVF10,gla12}
albeit not the only approach as other ways to relate process calculi are also used
that vary on the choice to map parallel composition to parallel composition (i.e.~homomorphism here)
\cite{journals/corr/cs-PL-9809008,journals/iandc/BusiGZ00,DeNicola:2006:EPK:1148743.1148750,DBLP:journals/corr/abs-1011-6436,gla12}.
Since the choice here is to build on prior results, valid encodings are the obvious
basis 
but no doubt this could be formalised
under different criteria.
Finally, the definition of parallel encodings here is to exploit the existing
encodings in the literature. 
However, other approaches are possible
\cite{90426,parrow.victor:fusion-calculus} and
many more as encoding $\lambda$-calculus into process calculi is common
\cite{96717,milner.parrow.ea:calculus-mobile,1998:MobileAmbients,citeulike:500640}.

The separation results here build upon results already in the literature.
For showing the inability to encoding concurrent languages into sequential, the
work of Abramsky \cite{Abramsky90thelazy} and Plotkin \cite{Plotkin97fullabstraction} can also be considered.
The impossibility of encoding CPC into $\pi$-calculus can be proved by using
matching degree or symmetry \cite[proofs for Theorem~7.2.5]{GivenWilsonPHD}.

}

\vspace*{-0.3cm}
\subsubsection*{Future Work}
Future work may proceed along several directions.
The techniques used to encode $SF$-calculus (here) and Turing Machines \cite{givenwilson:hal-00987594}
into CPC can be generalised for any combinatory logic,
indeed perhaps a general result can be proved for all similar rewrite systems.
Another path of exploration is to consider intensionality in concurrency with full results in a general manner,
this could include formalising the intensionality (or lack of) of Spi calculus, Psi calculi, and other popular process calculi.

\vspace*{-0.3cm}
\bibliographystyle{abbrv}
\bibliography{short}

\end{document}